\documentclass[10pt]{article}
\oddsidemargin 0in 
\textwidth 6.in 
\textheight 8.5in \sloppy

\usepackage{graphicx,xspace}
\usepackage{multirow}
\usepackage{url}

\usepackage{balance}  

\usepackage{indentfirst}
\usepackage{amsmath}
\usepackage{amssymb}
\usepackage{color}
\usepackage{url}
\usepackage[lined,boxed,ruled,commentsnumbered,linesnumbered]{algorithm2e}
\usepackage[usenames,dvipsnames,svgnames,table]{xcolor}
\usepackage{subfig}
\usepackage{tabularx,booktabs}
\usepackage{paralist}
\usepackage[normalem]{ulem}
\usepackage{threeparttable}
\usepackage{amsthm}

\newcommand{\neha}[1]{}

\newcommand{\reviewText}[1]{}
\newcommand{\eat}[1]{}
\newcommand{\bst}{\textsf{BloomSampleTree}\xspace}

\newtheorem{definition}{Definition}[section]
\newtheorem{proposition}[definition]{Proposition}
\newtheorem{claim}[definition]{Claim}

\def\hatz{\hat{z}}
\def\hatn{\hat{n}}
\def\pr{\mbox{P}}
\def\ex{\mbox{E}}
\def\leafsize{M_\bot}
\def\CB{{\cal B}}
\def\CT{{\cal T}}
\def\CM{{\cal M}}
\def\Node{\mbox{\sf Node}}

\setlength{\belowcaptionskip}{-10pt}

\usepackage{ifluatex}
\ifluatex
   \usepackage{fontspec}
   \setmainfont[Ligatures=TeX, Scale=.99]{Times New Roman}
   \setsansfont[Ligatures=TeX, Scale=.98, BoldFont=Myriad Pro Semibold]{Myriad Pro}
   \setmonofont[Ligatures=TeX, Scale=MatchLowercase]{Menlo}
\fi

\begin{document}
%
\title{Sampling and Reconstruction Using Bloom Filters}

\author{Neha Sengupta\\
IIT-Delhi\\
neha.sengupta@cse.iitd.ac.in
\and
Amitabha Bagchi\\
IIT-Delhi\\
bagchi@cse.iitd.ac.in
\and
Srikanta Bedathur\\
IBM-IRL\\
sbedathur@in.ibm.com
\and
Maya Ramanath\\
IIT-Delhi\\
ramanath@cse.iitd.ac.in}

\date{}
\maketitle

\begin{abstract}
In this paper, we address the problem of sampling from a set and reconstructing a set stored as a Bloom filter. To the best of our knowledge our work is the first to address this question. We introduce a novel hierarchical data structure called \bst{} that helps us design efficient algorithms to extract an almost uniform sample from the set stored in a Bloom filter and also allows us to reconstruct the set efficiently. In the case where the hash functions used in the Bloom filter implementation are partially invertible, in the sense that it is easy to calculate the set of elements that map to a particular hash value, we propose a second, more space-efficient method called HashInvert for the reconstruction. We study the properties of these two methods both analytically as well as experimentally. We provide bounds on run times for both methods and sample quality for the \bst{} based algorithm, and show through an extensive experimental evaluation that our methods are efficient and effective.
\end{abstract}

\section{Introduction}

Bloom filters, introduced by Bloom in the 1970's~\cite{bloom70}, are space-efficient structures for the set-membership problem. They have found numerous applications in a diverse array of settings because of the tremendous advantages they offer in terms of space. Broder and Mitzenmacher surveyed a host of these applications in 2003~\cite{broder03survey}, and, since then, the usage of Bloom filters has grown and diversified. Typically, these applications rely on the set-membership query being answered correctly with good probability, and are able to deal with the drawback that with some probability a false positive will occur. 

However, one fundamental question has not yet been addressed: How do we sample an element from a set stored in a Bloom filter? A related question -- How do we retrieve the set stored in the Bloom filter? -- has also not been addressed. We believe addressing these two problems will open up the possibility of using Bloom filters in applications that need to store, retrieve and/or sample from a large number of sets.  For example, storing and subsequently sampling from a large number of dynamic, online communities that form on social networks such as Twitter, Flickr, etc. (\cite{Romero_2011}, \cite{ghosh-wsdm:2011}, \cite{cheng-www:2014}), that could help advertisers determine where to target their products. Or storing and retrieving all call records associated with specific locations in crime-related investigations \cite{macmillan-hicss:2013}. 

We note that other compact structures, such as sketches, have been used as compact storage structures from which samples can later be obtained \cite{cormode05, monemizadeh10, jowhari11}. However, a limitation of this approach is that the sketches that are proposed to be created are specifically for the problem of sampling and tend to be output sensitive in their design (and do not support reconstruction). Our work, on the other hand, shows how to draw samples as well as reconstruct sets from a widely-used generic synopsis structure, the Bloom filter, that is also useful for several other applications.

\noindent
\emph{Problem Statement.} Formally speaking, if we are given a set $S$, drawn from a universe or name space $U$, that is stored in a Bloom filter $\CB$ (referred to as the \emph{query Bloom filter}), if we denote by $S(\CB)$ those elements of $U \setminus S$ that are false positives of $\CB$ (i.e., the query ``Is $x \in S$?'' answered by $\CB$ returns YES for all $x \in S \cup S(\CB)$), then:

\begin{enumerate}
\item an algorithm that {\em samples from $\CB$} is one that returns an element chosen uniformly at random from $S \cup S(\CB)$, and,
\item an algorithm that {\em reconstructs the set stored in $\CB$} returns the set $S\cup S(\CB)$.
\end{enumerate}

Since Bloom filters hide information about the elements stored in them providing only (partially correct) answers to membership queries, the natural way of trying to sample from a set stored in a Bloom filter is to fire membership queries with different elements of the name space at the Bloom filter. Such a method, referred to as a Dictionary Attack is not scalable since its running time is linear in the size of the name space, which may be huge.

\noindent
\emph{Solution Overview.} In this paper, we outline a method that approaches this task much more efficiently. Conceptually, we design a data structure, the \bst, that organizes the \emph{namespace} as a binary search tree. That is, each node of the tree stores a subset of the namespace, but at each level of the tree, the union of these subsets yields the \emph{entire} namespace. While the root of the tree, by itself, stores all the elements of the namespace, each leaf stores only a small subset of this namespace. Once this binary search tree is constructed, the key idea is to now locate only those leaves which potentially contain elements present in the given query Bloom filter $\CB$. This is done by intersecting $\CB$ starting from the root of the search tree and working our way towards the leaves. Entire subtrees are pruned away because they yield empty intersections, thus eliminating large parts of the namespace. Once we identify the relevant leaves, we can efficiently sample or reconstruct the original set using the dictionary attack method explained above. Note that this search tree needs to be constructed only once and will be repeatedly used for different query Bloom filters.

A drawback in this approach is that we are storing the entire namespace in the \bst{}, even though only a small part of it may be actually occupied. Sparse occupancy of a namespace is a regular occurence, especially when we consider non-numeric keys such as strings, where the namespace is typically of the order of $2^{64}$, but the actual occupancy is likely to be of the order of $2^{30}$ (a little over 1 billion) or perhaps less. Therefore, it is space-inefficient to construct a tree for the entire namespace, when a large number of the nodes are going to be empty. In order to address this we present a dynamic version of the \bst{}, we call it Pruned-\bst{} which takes into account the occupancy and can dynamically change its size and structure as the occupancy changes.

The \bst{}-based algorithms we provide for sampling and reconstruction have one very important feature: they do {\em not} require the hash functions used by the Bloom filter to be invertible. Our method only needs to be able to use those hash functions and will work if we are given the implementation of the Bloom filter used to store the set. It is also important to note that we do {\em not} distinguish between true elements of the set stored in the Bloom filter and the false positives that are created in the process of insertion. We approach the Bloom filter as is without any prior knowledge of what has been inserted in it, and without any method of distinguishing true elements from false positives. In summary, our method is designed to work efficiently in a scenario where: i) the namespace is potentially large, even dynamic, ii) the no. of interesting subsets is large (in the millions or billions) and may continue to grow indefinitely, iii) we need to either sample from or reconstruct a subset(s) from the set of interesting subsets, stored in the form of Bloom filters (specifically, these are our query Bloom filters).

We present our methods as an aid to the engineer who has chosen to use Bloom filters for a particular application and has optimised parameters to achieve a given level of accuracy (ie. ratio of true elements to all the elements that return a true answer to a membership query) and who has a way of dealing with false positives.

\paragraph*{Contributions}
(i) We introduce a novel data-structure called \bst{} that can be used sample from a set stored in a Bloom filter as well as reconstruct that set. The \bst{} takes into account the occupancy of the namespace and can change size as the occupancy changes, (ii) We provide theoretical bounds on the runtime and on the quality of samples generated by our \bst{}-based algorithm, show them to be near-uniform. (iii) We show through extensive evaluations that our \bst{}-based algorithms are efficient and provide good quality samples.

\paragraph*{Organization}
In Section~\ref{sec:related} we review the literature. We provide a brief background on Bloom filters and outline the framework in which our methods operate in Section~\ref{sec:background}. Section~\ref{sec:sampling} outlines two baseline techniques for sampling from Bloom filters, along with a discussion on their limitations and the need for our \bst{} method. The \bst{} for sampling and reconstruction are described in detail in Sections~\ref{sec:tree} and \ref{sec:recons}. The results of our detailed experimental analysis are presented in Sections~\ref{sec:expt} and \ref{sec:expt-dynamic}.

\section{Related Work}
\label{sec:related}
Bloom filters are one of the most widely used data structures for approximately  answering set membership queries. Their compact storage and efficient querying through simple bit operators has made them valuable in many different settings. A thorough survey of Bloom filters and their applications are available in ~\cite{broder03survey, tarkoma12survey}.  Despite their widespread use, we are not aware of any work that systematically addresses the problems of generating provably uniform samples using Bloom filters and reconstruct the original set at a given level of accuracy in an efficient way. 

The problem of identifying at least one true positive from the Bloom filter has been considered in an adverserial setting to study how resilient the Bloom filters are for dictionary-based attacks~\cite{bellovin07, naor14}. Given a Bloom filter, an adversary can mount an attack to obtain some elements of the original set by repeatedly posing queries on the Bloom filter -- potentially obtaining a large number of false positives, but also some true positive elements. In our work, we do not operate in an adversarial setting--we assume complete knowledge of the domain of values represented by the Bloom filter and the hash functions used. Given an accuracy level, our aim is to efficiently generate provably uniform random samples from the original set as well as to reconstruct the set as per the accuracy requirements. We systematically solve these problems and back up our solutions with detailed analysis of time complexity and accuracy.

\paragraph*{Sketches for Handling Large Datasets}
Bloom filters belong to a general class of approximation datastructures called \emph{sketches} or \emph{data synopses}, which compactly represent massive volumes of data while preserving some vital properties of the data needed for further analysis~\cite{cormodeFnt}. Some of the sketches used frequently in databases community include histograms, wavelets, samples, frequency and distinct-value based sketches, and so on. 

However, most of these synopses datastructures are used under the assumption that the underlying database is always accessible (e.g., in the case of histograms and samples) or not required (e.g., streaming scenarios).  Reconstructing the underlying set of data values at a given level of accuracy in an efficient manner is not their objective to begin with. Only recently, there have been some results which show how sketches can be used for generating samples, called $L_p$-samplers~\cite{monemizadeh10, jowhari11} which generalize earlier work on inverse sampling~\cite{cormode05}. In these, the goal is to maintain a synopses structure for a stream of updates (i.e., addition and deletion of counts) over a given domain of size $M$, such that at any time it is possible to sample with high accuracy the elements in probability proportional to their number of occurrances. 

Unlike these techniques, our approaches are not focused on streaming setting, and are not designed for specific forms of sampling. The proposed \bst{} approach can be used to generate uniform samples from Bloom filters, a widely-used and generic synopsis structure. 

\paragraph*{Trees and Bloom filters}
In this paper we present the \bst\ that comprises a complete binary tree with Bloom filters stored in every node for the purposes of sampling and reconstruction. Yoon et. al.~\cite{yoon-infocom:2014} also propose a structure that comprises a complete $d$-ary tree with Bloom filters at every node to address the multiset membership problem. Similar in flavor to Yoon et. al.'s structure is {\em Bloofi} proposed by Crainiceanu and Lemire who also address the multiset membership problem by representing each set as a Bloom filter stored at a leaf of a tree, and building the tree by combining these Bloom filters hierarchically~\cite{crainiceanu-infsys:2015}. While the flavour of both these structures is similar to our \bst\ but their concern is the problem of multiset membership testing and so the principle on which their trees are built is completely different from the principle on which we build our tree and the contents of the Bloom filters stored at each node both bear no relationship to what we store in each node.  Another work that combines Bloom filters and trees is by Athanassoulis and Ailamaki~\cite{athanassoulis-vldb:2014} where the authors modify B$^+$-trees by placing Bloom filters at their leaves to create approximate tree indexes that seek to exploit data ordering to improve storage performance. Their structure is completely different from ours in intent and design.

\section{Preliminaries}
\label{sec:background}

In this section, we briefly provide the necessary background in Bloom filters, and subsequently describe the framework which our methods operate in. 
\subsection{Bloom Filters}
A Bloom filter is a probabilistic data structure used to space-efficiently store the elements of a set. It comprises a bit array of $m$ bits, along with $k$ independent hash functions, $h_1 \ldots h_k$. An empty set is represented by a Bloom filter each of whose bits is $0$. For each element $x$ in a non-empty set, the $k$ array positions indicated by $h_1(x) \ldots h_k(x)$ are set to $1$.

A Bloom filter supports membership queries, i.e. a Bloom filter $\CB(S)$ storing a set $S$ can answer queries of the form ``is $x \in S$'' for any $x$, with a false positive probability that depends on the number of bits in $\CB(S)$ and $S$. $x$ is hashed using each of the $k$ hash functions to obtain $k$ array positions. If the bit at each of these positions is set, then the result is positive. Since these bits could have been set due to the insertion of other elements, the probability of a false positive is non-zero and evaluates to $\approx (1 - e^{(-kn/m)})^k$. A Bloom filter is incapable of false negatives.

Other than the membership query, the operation of union and intersection on a pair of Bloom filters is also supported and can be implemented using bitwise $OR$ and $AND$ operations respectively. If $\CB(A)$, $\CB(B)$, and $\CB(A \cup B)$ use the same $m$, the same set of hash functions, and are over the same namespace of values then,
$
\CB( A \cup B)  = \CB(A) \cup \CB(B).
$
Also, if $\CB(A)$, $\CB(B)$, and $\CB(A \cap B)$ use the same $m$ and the same set of hash functions, then
$
\CB(A \cap B) = \CB(A) \cap \CB(B),
$
with probability
$
(1- \frac{1}{m})^{k^2 |A - A \cap B| |B - A \cap B|}
$ \cite{Guo:2010}.

For two fixed, disjoint sets $S_1, S_2 \subset U$, each represented by Bloom filters of $m$ bits and hash functions $h$, the false set overlap predicate $FSO_{\cap}(S_1,S_2,h)$ is true if $\CB(S_1)\cap \CB(S_2) \not= \phi$ even though $S_1 \cap S_2 = \phi$.  A false set overlap of $S_1$ and $S_2$ by Bloom filter intersection of $\CB(S_1)$ and $\CB(S_2)$ is reported with probability \cite{Jeffrey_understandingand},
\begin{equation}
\label{eq:fso}
\pr[FSO_\cap | h] = 1 - \left( 1 - \frac{1}{m} \right)^{k^2|S_1||S_2|}
\end{equation}

\subsection{Framework}
Our methods operate on a database $\mathcal{D} = \{ X_i \,|\, i = 1, \ldots \}$ of 
$X_i = \{x_j | x_j \in \mathcal{M}\}$  which are subsets of elements drawn from a namespace $\mathcal{M}$ of size $M$. Instead of operating on the $\mathcal{D}$ directly,we assume we are only given with a compact approximation $\bar{\mathcal{D}}$ where each $X_i$ is represented by a Bloom filter $\CB(X_i)$, for a given length of the filter (in bits), $m$, and the set of hash functions used in its construction, $H$. Such collections of subsets of elements are commonly seen in many application settings including graph databases  -- to represent the adjacency list of each vertex, information retrieval  -- to represent the list of documents where a keyword occurs, etc.

The first task we are interested in tackling in this setting is that of generating a random sample  given $\bar{\mathcal{D}}$. Specifically, given information about other parameters used in building this approximation viz., $m$, $H$ and $\mathcal{M}$, we would like to obtain a provably uniform random sample from a given $X \in \mathcal{D}$ - the original database. Since we are operating on an approximate representation, it is also expected that a fixed amount of \emph{inaccuracy} (measured as the probability of sampling an element which is not in $X$) is tolerable and is specified as an input to the system to begin with. It should be noted that this inaccuracy is naturally linked to the probability of false positives in Bloom filters, and thus for a given level of inaccuracy (or accuracy) the Bloom filters used in $\hat{\mathcal{D}}$ have to be designed. The second task, a natural extension of the above, is to reconstruct the original entry $X$ in the database with high accuracy. 	

\section{Sampling and Reconstruction}
\label{sec:sampling}

We describe two approaches to sample an element from a set and reconstruct a set stored as a Bloom filter. The first of these is a simple "dictionary attack"-based method (DictionaryAttack). The second uses the weakly invertible property of certain types of hash functions to do sampling (HashInvert). While both methods can be used to sample from as well as reconstruct a Bloom filter, the DictionaryAttack method suffers from high runtime inefficiencies, while the HashInvert method provides no guarantees on the quality of the sample. We compare our \bst{} algorithm against these two baselines and highlight the advantages and disadvantages of each approach in detail in Section \ref{sec:expt}.

\paragraph*{DictionaryAttack: Sampling with Membership Queries}

The DictionaryAttack algorithm relies on reservoir sampling to guarantee a uniform sample. This is equivalent to reconstructing the input set and sampling an element from it. It proceeds as follows. A membership query is fired on the input set for each element in the namespace. When a positive is reported for an element, that element is retained as the sample with diminishing probability proportional to the size of the set reconstructed so far \cite{vitter}. In particular, if $n'$ is the number of positives reported until now, then the $(n'+1)^{th}$ positive is retained as the sample with probability $1/(n'+1)$. Clearly, the complexity of this algorithm is $O(M)$, where $M$ is the size of the namespace.

Note that it is straightforward to use this method to \emph{reconstruct} the original set.

\paragraph*{HashInvert - Sampling with Invertible Hash Functions}
This method assumes that the hash functions are weakly invertible. A hash function $h$ is weakly invertible if given the value of $h(x)$, one can find a set of values $S$ such that $\forall y \in S, h(y) = h(x)$. An example of a weakly invertible hash function is $h(x) = (ax + b) \% c$, where $a$, $b$, and $c$ are constants. Knowing the namespace $M$, it is straightforward to find a set of elements that all hash to $h(x)$. 
Given a Bloom filter $B$, it exploits the weak invertibility of the hash functions to invert a randomly sampled SET bit $s$ into $k$ candidate sets $P_1(s), P_2(s) \ldots P_k(s)$, each obtained using a different hash function. The $k$ candidate sets are subsequently pruned using the membership queries on the Bloom filter to obtain $P'_1(s),P'_2(s) \ldots, P'_k(s)$. A value sampled uniformly at random from $ \bigcup_{i=1}^{k} P'_i(s) $ is the final sample returned.

\paragraph*{Analysis}
When sampling from the obtained candidate sets is done using a method such as reservoir sampling, the HashInvert method occupies no extra space.

Sampling a set bit takes $\mathcal{O}(m)$ time, where $m$ is the size of the Bloom filter. Once a set bit is chosen, inversion using a hash function takes $\mathcal{O}\left(\frac{M}{m}\right)$ time. The overall time taken for sampling is $\mathcal{O}\left(m + k\frac{M}{m}\right)$.

Note that, in contrast to DictionaryAttack, which provides uniformly random samples, no bounds are given regarding the \emph{quality} of the samples in the case of HashInvert. However, the algorithm can be used to \emph{reconstruct} the original set by exhaustively running the HashInvert algorithm on all set bits of the Bloom filter.

A simple trick gives us more benefits from the HashInvert algorithm. If the Bloom filter is dense, then the number of UNSET bits (0-bits) are potentially less than the number of SET bits. Therefore, instead of inverting the set bits, we can invert the unset bits. This results in a set of elements which are \emph{not present} in the original set. Therefore, the original set can be recovered from a set difference operation.

\section{Bloom Sample Tree}
\label{sec:tree}

In this section we define the \bst{} data structure that will help us sample from and reconstruct a set stored in a Bloom filter. The \bst{} basically organises the entire namespace. Note that the \bst{} is built once and is then used repeatedly to sample from any given query Bloom filter $B$.

\subsection{Definition}
\label{sec:tree:definition}

The \bst{} is a complete binary tree, denoted as $\CT$, with $\log M/\leafsize$ levels, where $\leafsize$ is a threshold whose choice we discuss later in this section. Every node in the \bst{} has a Bloom filter that stores a subset of the namespace. Every level of the tree contains the entire namespace partitioned uniformly amongst the nodes of that level. Hierarchically speaking the organisation is laminar in the sense that the union of the subsets of the namespace stored in two sibling nodes gives us the set stored in their parent node. All the Bloom filters used in the \bst{} have the same parameters -- viz., $m$, the number of bits and, $H$, the set of hash functions, as the Bloom filters used for the sets we are sampling from (or trying to reconstruct). The reason for this is that we will be frequently intersecting the Bloom filter $\CB$ of the set of interest with the Bloom filters $\CB_i$ stored at various nodes in the \bst{}. We now present a more formal definition.

\begin{definition}
Given a namespace $\mathcal{M}$ of size $M$, the size of the Bloom filter $m$, a set of hash functions used for construction $H$ of the form $h : M \rightarrow \{0,1,\ldots, m-1\}$, and an integer parameter $\leafsize < M$ the  {\em \bst{}}, $\CT(M, m,H, \leafsize)$, is a collection of Bloom
  filters 
\[ \left\{ \CB_{i,j} : 0 \leq i \leq \log \frac{M}{\leafsize}, 0 \leq j
< 2^i\right\},\]
such that each of these Bloom filters uses a bit vector of size $m$ and the hash functions $H$, and with the property that the Bloom
filter $\CB_{i,j}$ stores the elements 
\[\left\{\ell: j \cdot \frac{M}{2^i} \leq \ell < (j+1) \cdot \frac{M}{2^i}\right\}.\]
\end{definition}
Note that,
\begin{asparaitem}
\item The collection of Bloom filters forms a tree structure. Since the
  portion of the name space stored in $\CB_{i,j}$ is partitioned equally
  amongst the nodes $\CB_{i+1, 2j}$ and $\CB_{i+1,2j+1}$, $\CB_{i,j}$ is the
  parent of these two nodes in the tree.
\item The leaves of the tree all store sets of size $\leafsize$. The
  namespace is not further subdivided. 

\end{asparaitem}

\begin{figure}[tbp]
    \begin{center}
	\includegraphics[width=0.6\columnwidth,trim=110 300 110 80,clip]{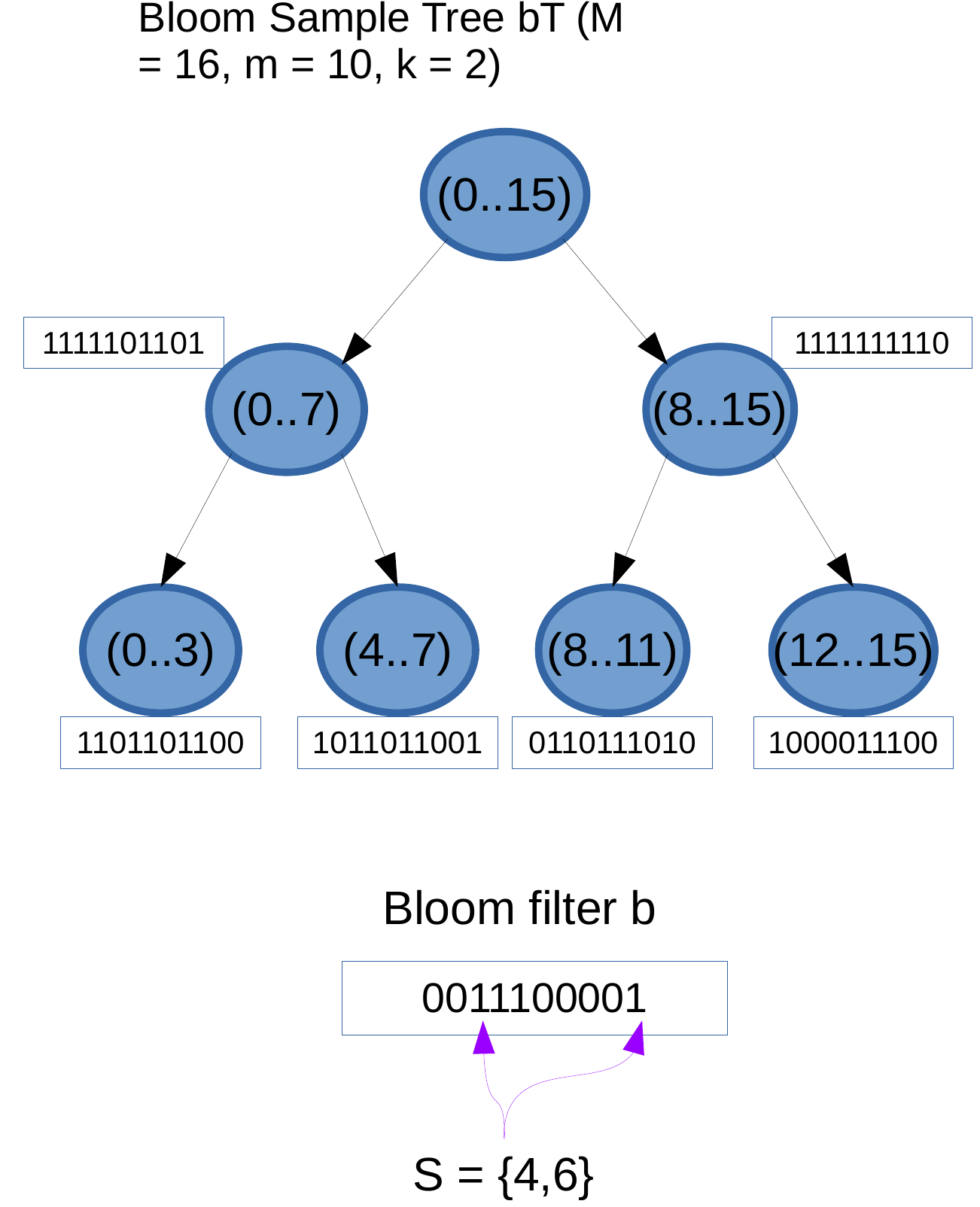}
    \end{center}
    \caption{A \bst{} $\CT$ with 3 levels and the query Bloom filter $b$ representing the set $S$ from which we want to sample}
    \label{fig:BST}
\end{figure}

Figure \ref{fig:BST} shows an example \bst{} for a namespace of $M = 16$. Each node in the tree, except for the root, consist of Bloom filters of size $m = 10$ storing the range of elements depicted at the node. A set $S = \{4,6\}$, stored as Bloom filter $b$ is the query set that we need to sample from or reconstruct. Note that the Bloom filters in $\CT$ are constructed with the same $m$ and $H$ as the Bloom filter $b$ for set $S$.

\subsection{Pruned-\bst{}}
As mentioned in the introduction, even though the namespace itself may be large, it is likely that only a small portion of it is occupied. Therefore, building a complete \bst{} as explained in the previous section potentially wastes a huge amount of space. For example the real-world data set on which we experimented (see Section~\ref{sec:expt-dynamic}) is taken from Twitter and contains 7.2 million user ids distributed in a namespace of size 2.2 billion, i.e., the fraction of the namespace occupied is of the order of $10^{-1}$. Therefore, in practice we build the tree only for those portions of the namespace that are actually occupied; we call this condensed version the ``Pruned-\bst{}''. This tree can dynamically change its structure, based on the change in the occupancy of the namespace. That is, if more of the namespace is assigned, then the tree potentially contains more nodes to reflect that. An overview of the algorithm to build this tree is as follows: Let $\mathcal{M'} \subseteq \CM$ be the set of identifiers that are currently in use ($\CM$ is the namespace, $M' = |\CM'|$ ). 
	\begin{itemize}
		\item Initialise queue with $\Node(0,\log M)$.
		\item Repeat until the queue is empty
\begin{itemize}
		\item Dequeue $\Node(a,b)$.\\ {\small /* $b$ is the level, $a$ is the offset within that level */}
		\item Check if the range $(a,a+2^b-1)$ has a non-empty intersection with $M'$.
\begin{itemize}
		\item If yes, then create $\CB_{b,a/2^b}$ and attach in the tree; insert elements from $\CM'$ in the range $(a,b)$ in $\CB_{b,a/(2^b)}$;   if $b > \log \leafsize$ then enqueue $\Node(a,b-1)$, $\Node((a+2^{b-1}),b-1)$.\\
{\small /* Create the Bloom filter corresponding to the subrange and grow the next level at this point */}
		\item If no, then do nothing.
	\end{itemize}
	\end{itemize}
	\end{itemize}
The above algorithm essentially goes down the tree building subtrees where required to accommodate elements of $M'$ and ignoring subtrees corresponding to ranges that have no overlap with $M'$. Although this algorithm constructs the search tree when the $M'$ is known ahead of time, it is easy to see how to evolve the Pruned-\bst{} when $M'$ grows (e.g. when new Twitter accounts are made)--either we need to insert this new element into already existing nodes in the tree, or we need to create a new node (and potentially its subtree).

The time taken to build the Pruned-\bst{} offline is proportional  to the size of the final tree constructed multiplied by the time for the range query on $M'$. The time taken to update the tree is proportional to the height of the tree.

\subsection{Sampling with the \bst{}}
\label{sec:tree:sampling}

Given a query Bloom filter $\CB$ to sample from, the algorithm proceeds from the root in the following recursive manner and relies on the pruning of the search space for performance gains.

\begin{asparaitem}
\item{At a given (non-leaf) node, compute the intersection of the Bloom filters stored in the left and right children of this node with $b$. If for both child nodes, the intersection is empty, then $b$ does not contain any element belonging in the range associated with this node. Therefore the subtree rooted at this node is pruned from the search.}
\item{If intersection with only one child is non empty, then the search proceeds along that child node. The other child node and the subtree rooted at it are pruned from the search.}
\item{If intersection with both child nodes is non empty, then one of the child nodes is selected with probability directly proportional to the estimated number of elements in their corresponding intersections and the search proceeds along that child. Note that it is possible that the intersection was a false positive and this is discovered further down this subtree. In that case, the search then backtracks and proceeds along the other child node.

The estimated number of elements in the intersection of two Bloom filters $\CB_1$ and $\CB_2$ is given by the following expression \cite{papapetrou10}:
$$
\hat{S}^{-1} ( t_1, t_2, t_{\wedge} ) = 
\frac{
    \ln{\left(  m - \frac{t_{\wedge} \times m - t_1 \times t_2}{m - t_1 - t_2 + t_\wedge} \right) - \ln(m)}
}
{
    k \times \ln \left(1 - \frac{1}{m} \right)
}
$$
where $t_1$ is the number of bits set in $\CB_1$, $t_2$ is the number of bits set in $\CB_2$, $m$ is the size of both Bloom filters, $k$ is number of hash functions used in both, and $t_\wedge$ is the number of bits set in the bitwise AND of $\CB_1$ and $\CB_2$. We recall that Equation~(\ref{eq:fso}) gives us the probability that this intersection is incorrectly estimated to be non-empty when the two sets stored in $\CB_1$ and $\CB_2$ are disjoint. We discuss this issue further in Section \ref{sec:practical}.
}

\item{At a leaf node, every element in the range of the node is checked for membership in $b$. The sample at this leaf node is a value sampled uniformly at random from the set of values that satisfy the membership test of $b$. If none of the elements within this range satisfy the membership query it indicates that the search has reached this leaf node due to a (string of) false set overlap(s). In this case, the sample at this node is $NULL$.}
\end{asparaitem}

This algorithm is called {\tt BSTSample}, and a formal description is in Algorithm \ref{alg:BSTSample}

\begin{algorithm}[h]
\SetAlgoLined
\SetKwFunction{BSTSample}{BSTSample}
  \SetKwProg{myalg}{Algorithm}{}{}
  \myalg{\BSTSample{$bST$:BloomSampleTree, $b$:Query Bloom filter}}{
 $[s,e] \leftarrow bST . range$\;
  \tcc{At a leaf, exhaustively check the interval of the leaf for membership in $b$}
 \eIf{$bST$ is leaf}{
  $S \leftarrow \phi$ \;
  \For {$i$ in $(s,e)$}{
   \If{$i \in b$}{
    $S \leftarrow S \cup \{i\}$ \;
   }
  }
  $x \leftarrow $ uniformly sampled from $S$ \;
  return $x$ \;
  }{
    lFlag = EstimatedSize$(bST.left \cap b)$\;
    rFlag = EstimatedSize$(bST.right \cap b)$\;
    \tcc{If no child intersects, we have reached here due to a false positive and should return $NULL$}
    \uIf{($not$ lFlag $and$ $not$ rFlag)}{
       return $NULL$ \;
    }
    \uElse{
       \tcc{Randomly select one to proceed search along, with probability proportional to estimated number of elements}
       $z$ = uniform$(0,1)$\; 
       \eIf{$z < lFlag/(lFlag+rFlag)$}{
         $sample \leftarrow $ \BSTSample{$bST.left,b$} \;
         \tcc{In case no sample is found from that child, search along the other child - Backtracking}
         \uIf{$sample = NULL$}{
            $sample \leftarrow$ \BSTSample{$bST.right,b$}\;
         }{
            \KwRet $sample$\;
         }
       }{
         $sample \leftarrow $ \BSTSample{$bST.right,b$} \;
         \uIf{$sample = NULL$}{
            $sample \leftarrow$ \BSTSample{$bST.left,b$}\;
         }{
            \KwRet $sample$\;
         }
       }
    }
  }
  }
\caption{Sampling with BloomTrees}
\label{alg:BSTSample}
\end{algorithm}

Figure \ref{fig:BSTDiagram} shows a typical scenario that is encountered when sampling with the \bst{}. The numbers to the side of the node indicate the order in which the nodes are traversed. As shown, the algorithm ultimately generates a sample from a leaf node by following one "true" path out of several false positive paths that may branch out at multiple places. Note, for example, that node 7, is ultimately determined to have led to several false positive paths discovered subsequently in its subtree. In contrast, the whole subtree at node 4 is immediately pruned from the search space. Once the search reaches a leaf, a brute force search is conducted and there is no scope of a false set overlap due to Bloom filter intersection.

\begin{figure}[t!b]
    \begin{center}
      \includegraphics[width=0.35\textwidth,trim=10 300 10 0,clip]{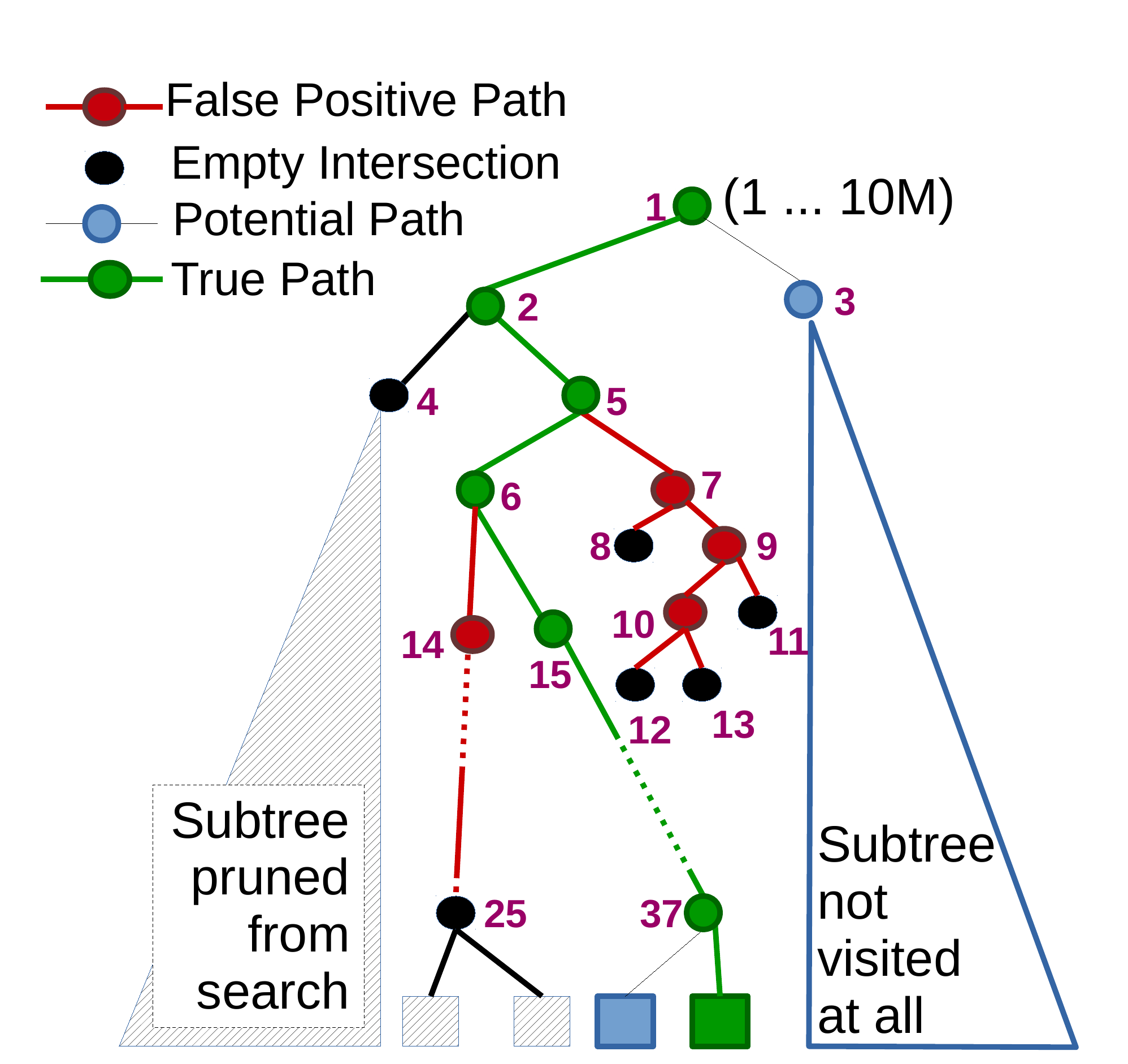}
    \end{center}
    \caption{A typical scenario: Sampling with \bst{}. A False Positive Path is chosen because of errors in determining the empty intersection. The Empty Intersection immediately results in the pruning of a subtree. Potential paths are left unexplored when there is choice of following either subtree. The True Path is the path actually taken by the algorithm to generate the sample.}
    \label{fig:BSTDiagram}
\end{figure}

\eat{\begin{algorithm}[tb]
\small
\SetAlgoLined
\SetKwFunction{BSTSample}{BSTSample}
  \SetKwProg{myalg}{Algorithm}{}{}
  \myalg{\BSTSample{$\CT$: \bst{}, $\CB$: Bloom filter}}{
 $[s,e] \leftarrow \CT . range$\;
  \tcc{At a leaf, exhaustively check the interval of the leaf for membership in $\CB$}
 \eIf{$\CT$ is leaf}{
  $S \leftarrow \emptyset$ \;
  \For {$i$ in $(s,e)$}{
   \If{$i \in \CB$}{
    $S \leftarrow S \cup \{i\}$ \;
   }
  }
  $x \leftarrow $ uniformly sampled from $S$ \;
  return $x$ \;
  }{
    $t_b = $ numOnes($\CB$)\;
    $t_l = $ numOnes($\CT$.left)\;
    $t_r = $ numOnes($\CT$.right)\;
    $t_{lb} = $ numOnes($\CB$ $\cap$ $\CT$.left)\;
    $t_{rb} = $ numOnes($\CB$ $\cap$ $\CT$.right)\;
    lFlag = $\frac{\ln{\left(  m - \frac{t_{lb} \times m - t_l \times t_b}{m - t_l - t_b + t_{lb}} \right) - \ln(m)}}{k \times \ln \left(1 - \frac{1}{m} \right)}$\;
    rFlag = $\frac{\ln{\left(  m - \frac{t_{rb} \times m - t_r \times t_b}{m - t_r - t_b + t_{rb}} \right) - \ln(m)}}{k \times \ln \left(1 - \frac{1}{m} \right)}$\;
    \tcc{If no child intersects, we have reached here due to a false positive and should return $NULL$}
    \uIf{($not$ lFlag $and$ $not$ rFlag)}{
       return $NULL$ \;
    }
    \tcc{If only one child intersects, proceed search along that child}
    \uElseIf{($not$ lFlag)}{
       \KwRet \BSTSample{$\CT.right,\, \CB$}\;
    }
    \uElseIf{($not$ rFlag)}{
       \KwRet \BSTSample{$\CT.left,\, \CB$}\;
    }
    \tcc{Both children intersect} 
    \uElse{
       \tcc{Randomly select one to proceed search along, with probability proportional to estimated number of elements}
       $z$ = uniform$(0,1)$\; 
       \eIf{$z < lFlag/(lFlag+rFlag)$}{
         $sample \leftarrow $ \BSTSample{$\CT.left,\,\CB$} \;
         \tcc{In case no sample is found from that child, search along the other child}
         \uIf{$sample = NULL$}{
            $sample \leftarrow$ \BSTSample{$\CT.right,\,\CB$}\;
         }{
            \KwRet $sample$\;
         }
       }{
         $sample \leftarrow $ \BSTSample{$\CT.right,\,\CB$} \;
         \uIf{$sample = NULL$}{
            $sample \leftarrow$ \BSTSample{$\CT.left,\,\CB$}\;
         }{
            \KwRet $sample$\;
         }
       }
    }
  }
  }
\caption{Sampling with {\bst{}}s}
\label{alg:BSTSample}
\end{algorithm}
}
\paragraph*{Sampling multiple items} The algorithm presented for sampling outputs a single sample. To sample multiple items we could run this algorithm multiple times. However, these multiple runs can be done together in one pass down the \bst{} as we now explain. Given an integer $r$ that is less than the size of the set stored, we send $r$ independent search paths down the \bst{} according to the algorithm {\sf BSTSample}. These paths are sent down the \bst{} in a single pass since all the paths arriving at any internal node or leaf can be processed at the node or leaf before we move on. If at a node we find that the Bloom filters at both its children intersect with the query Bloom filter, we take each of these $r$ paths and, independently of the other paths choose one of the children at random as in {\sf BSTSample} and send the path down to that child. This continues till each of the $r$ paths reaches a leaf. Let us take a concrete example to illustrate this process: Assume we are given a query Bloom filter $\CB$ and $r=3$. We intersect $\CB$ with the Bloom filters $\CB_{1,1}$ and $\CB_{1,2}$ stored in the left and right child of the root of \bst{} and estimate the size of both intersections, let us say they are $k_1$ and $k_2$. Now throw three independent coins biased to come up heads with probability $k_1/(k_1+k_2)$. Suppose two of these coins come up heads and one comes up tails, recursively call two instances of the multiple sampling method with $\CB$ on $\CB_{1,1}$ with $r=2$ and on $\CB_{1,2}$ with $r = 1$. 

It is easy to see that given the tree structure of the \bst{}, such an
extension of the algorithm {\sf BSTSample} will, in general, perform
better than $r$ times the running time for the case when we ask for a
single sample as output. Since all the paths behave like a single
sampling path of {\sf BSTSample} the guarantee on sample quality is
maintained. Finally, if two or more paths happen to reach the same
leaf we can sample at that leaf with or without replacement depending
on whether the $r$ samples are to be generated with or without
replacement.

\subsection{Summary of Analyses}
\label{sec:summary}
Given the \bst\ structure and the algorithm for sampling, we briefly summarise the analyses we performed and the effect of the various parameters.

\begin{asparadesc}
	\item[Quality of samples] The first question we answer is whether our method generates a uniformly random sample. The answer is that a uniformly random sample in indeed generated with high probability. We prove this property in Section \ref{sec:Tree:analysis} and show this empirically as well in Section \ref{sec:expt:sampling}.
	\item[Accuracy] Given that Bloom filters are approximate data structures, it is possible that the samples we generate do not actually belong to the original set (recall that a sample is generated by membership queries at a leaf). We quantify the \emph{accuracy} of our samples as follows:
	
		$$acc = \frac{n}{n+(M-n)*FP}$$
		
		where $n$ is the number of elements in the query set, $M$ is the size of the namespace and $FP$ is the probability of false positives in our Bloom filter implementation.	 The accuracy defined here simply computes the ratio of correct outcomes to all potential outcomes of the algorithm.
		
		Clearly the size of the Bloom filter $m$ has an effect on accuracy and we can determine $m$ based on the desired accuracy. We show the performance of our method for various values of accuracy in Section \ref{sec:expt}.
	\item[Runtime analysis] The runtime of the algorithm depends on the number of false paths it may follow. We analytically show the expected number of nodes visited in Section \ref{sec:Tree:analysis}, given a \bst. However, we also address a practical issue here with regard to runtime -- the cost of performing intersections at a node as opposed to the cost of performing a number of membership queries. Note that it is possible that based on the hash function used, the cost of membership queries may be cheaper or more expensive than the cost of intersections. These two costs are directly related to the no. of elements stored at the leaf, $M_{\perp}$ and the \emph{height} of the \bst is $\log \frac{M}{M_{\perp}}$. We tradeoff the costs as follows:
	
		If $m_{cost}$ is the cost of one membership query to a Bloom filter of size $m$ with $k$ hash functions, and $i_{cost}$ is the cost of an intersection between a pair of bloom filters of size $m$, then, at a current node $N$, storing $N_{\perp}$ values, we would like to determine whether it is better to perform membership queries over $N$ or perform intersections until the leaf which is at most at level $\log N_{\perp}$ below $N$. If performing membership queries is preferred over traversing further down in the tree, we can truncate the tree such that $N$ is the leaf of the tree.
			Hence, we determine
			$ M_{\perp} = \max N_{\perp},$ such that
			$$\frac{N_{\perp}}{\log N_{\perp}} \leq \frac{i_{cost}}{m_{cost}}.$$		
	We empirically show the runtime costs throughout Section \ref{sec:expt}.
	\item[Memory requirements] The memory required to store the \bst\ (which is constructed only \emph{once} and used repeatedly) depends on the size of the Bloom filter $m$ and the number of levels in the tree, $\log \frac{M}{\leafsize}$. An interesting observation here is that, in our framework, there is \emph{no tradeoff} between memory and accuracy or memory and runtime. The tradeoff is between accuracy and runtime, as explained in the previous paragraphs. Therefore, while we set the best possible $m$ and $\log \frac{M}{\leafsize}$ in order to optimize the runtime, the memory required may actually reduce for increased accuracy. The cause for this is that while we will need to use a \emph{larger} Bloom filter in the \bst\ for increased accuracy, we would potentially \emph{reduce} the number of levels so as to reduce the intersection cost (as described in the previous paragraph). The effect of this is that we end up reducing the space used, while increasing accuracy, but also increasing the runtime. We discuss empirical results about this in more detail in Section~\ref{sec:expt:sampling}.
 
\end{asparadesc}

\subsection{Sample quality and running time}
\label{sec:Tree:analysis}

The first question that arises is: what is the distribution of samples
{\tt BSTSample} produces? Our aim is to produce a uniform
distribution from the set stored in the Bloom filter. We present a
theoretical result that shows that the samples produced are near
uniform. We first state the result and then discuss its implications.
\begin{proposition}
\label{prp:uniformity}
Given a set $S$ with $|S| = n$ taken from a name space of size $M$, if
we run {\tt BSTSample} on $S$ with a \bst{}\ $\CT(M,m,H,\leafsize)$ such that $|H| = k$, define 
$\epsilon(m) = \frac{\sqrt{2 nk(\log m + \log \log m + \log
    n)}}{m}.$
Then the 
probability that the sampling algorithm finally samples from an $L
\subseteq S$ of size $\ell$ that is stored in a Bloom filter in a leaf
of the \bst{} lies between
$(1 - \epsilon(m)) \cdot \frac{\ell}{n}$ and $  (1 + \epsilon(m)) \cdot \frac{\ell}{n}$ 
with probability at least $ 1 - \frac{4}{\log m}$,
as long as 
$f(m) = 2\cdot \epsilon(m) \cdot \log \frac{M}{\leafsize} \rightarrow 0$ 
as $m \rightarrow \infty$.
\end{proposition}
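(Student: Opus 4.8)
The plan is to compare the execution of {\tt BSTSample} on $\CB$ against an \emph{idealized} version in which (i) a Bloom‑filter intersection $\CB \cap \CB_{i,j}$ is reported non‑empty exactly when $\mathrm{range}(\CB_{i,j})$ genuinely contains an element of $S$ (no false set overlaps), and (ii) the estimator $\hat S^{-1}$ returns the true count $|S \cap \mathrm{range}(\CB_{i,j})|$ at every node. In the idealized run, on the path from the root towards the leaf whose range holds the $\ell$ elements $L\subseteq S$, at an internal node $v$ the algorithm descends towards $L$ with probability $|S\cap\mathrm{range}(\text{child}_L)|/|S\cap\mathrm{range}(v)|$ when both children are ``live'' and with probability $1$ when only the $L$‑side child is live; the product of these factors over the $\log\frac{M}{\leafsize}$ edges of the root‑to‑leaf path telescopes to exactly $\ell/n$, and the uniform draw made at the leaf then gives each element of $L$ probability $1/n$. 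So the idealized algorithm is exactly uniform, and it suffices to bound how far the real run drifts from it.

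The first real step is to argue that \emph{false set overlaps do not perturb the output distribution at all} --- they only cost running time. This is because {\tt BSTSample} backtracks exhaustively: if a subtree is disjoint from $S$, every leaf in it fails its membership test, the recursion returns $NULL$, and control is forced back up and into the sibling. By induction on subtree height, a recursive call on any subtree containing at least one element of $S$ returns a non‑$NULL$ sample, and a call on a subtree disjoint from $S$ returns $NULL$. Hence, conditioned on reaching an internal node $v$, the law of the returned sample depends only on the split probability $\hat p_v = \hat n_{\text{child}_L}/(\hat n_{\text{child}_L}+\hat n_{\text{sibling}})$ used at $v$ \emph{when both children of $v$ genuinely contain elements of $S$}; a spurious ``live'' flag on a child that is actually disjoint from $S$ (a false set overlap, whose probability is governed by Equation~(\ref{eq:fso})) merely inserts a detour that backtracking undoes. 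This reduces the analysis of sample quality to controlling, at each of the at most $\log\frac{M}{\leafsize}$ genuine branching nodes on the path to $L$, the gap between $\hat p_v$ and the ideal fraction $|S\cap\mathrm{range}(\text{child}_L)|/|S\cap\mathrm{range}(v)|$.

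Next I would fix the hash functions $H$ (freezing all randomness in the Bloom filters, leaving only the algorithm's coin flips) and control the estimator error on a high‑probability event. Conditioned on the inserted elements, the number of set bits of each Bloom filter is a function of the $nk$ hash values, so a Bernstein/Hoeffding‑type bound gives that, except with probability $O\!\big(1/(nm\log m)\big)$, this count deviates from its mean by at most $\sqrt{2nk(\log m + \log\log m + \log n)}$ --- here $\log(nm\log m) = \log m + \log\log m + \log n$ is precisely what sits under the radical in $\epsilon(m)$, and the $1/m$ prefactor in $\epsilon(m)$ is this deviation expressed as a fraction of the $m$ bits. A union bound over the relevant collection of nodes (those that can carry an element of $S$, together with their siblings) keeps the total failure probability $O(1/\log m)$, which is the source of the $1-\frac{4}{\log m}$ in the statement. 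On the complementary good event, substituting the bit‑count deviations into the explicit formula for $\hat S^{-1}$ and then into $\hat p_v$ bounds each split factor by $1\pm\delta(m)$ for a $\delta(m)$ that, multiplied over the $\log\frac{M}{\leafsize}$ branching levels and linearized using the hypothesis $f(m) = 2\epsilon(m)\log\frac{M}{\leafsize}\to 0$, keeps the telescoped product inside $\big[(1-\epsilon(m))\tfrac{\ell}{n},\,(1+\epsilon(m))\tfrac{\ell}{n}\big]$.

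I expect the main obstacle to be exactly this last step: converting the additive deviation of the set‑bit counts into a clean multiplicative error on $\hat p_v$ through the \emph{nonlinear} estimator $\hat S^{-1}$, uniformly over every node on every possible path. The delicate cases are nodes high in the tree whose stored range is much larger than $m$, where $\CB_{i,j}$ is near‑saturated and the estimator is least stable, and --- crucially for the backtracking induction --- verifying that a child whose range genuinely contains an element of $S$ is never estimated as empty, since otherwise a true subtree could be pruned and the output distribution biased. Once the per‑node estimator bound is pinned down, the telescoping of the split factors and the union bound are routine, and the assumption $f(m)\to 0$ is precisely the slack needed to keep the accumulated multiplicative error linear in $\epsilon(m)$.
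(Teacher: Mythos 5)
Your plan follows essentially the same route as the paper's proof: a Mitzenmacher-type concentration bound on the Bloom-filter bit count with exactly the stated $\epsilon(m)$, converted into a $(1\pm O(\epsilon))$ multiplicative bound on each split probability $\hatn_l/(\hatn_l+\hatn_r)$, telescoped over the $\log\frac{M}{\leafsize}$ levels of the true path, with a union bound over levels and the at most $n$ relevant leaves giving $1-\frac{4}{\log m}$, and the hypothesis $f(m)\rightarrow 0$ absorbing the accumulated multiplicative error. The nonlinear-estimator step you flag as the main obstacle is dispatched in the paper via the logarithmic form of the estimator ($\hatn = \log(\hatz/m)/(k\log(1-\frac{1}{m}))$, so an additive $\epsilon m$ deviation in $\hatz$ linearizes directly to $n(1\pm\epsilon)$), while your explicit argument that backtracking renders false set overlaps distribution-neutral is a point the paper leaves implicit.
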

\begin{proof}
Probability of a bit being zero after insertion of $n$ elements $ = \left( 1 - \frac{1}{m} \right) ^n$. Let $\hatz$ be a random variable indicating the number of zero bits. We have that,
$$
    \ex(\hatz) = m \left( 1 - \frac{1}{m} \right) ^ {nk}
$$
or that, $\ex(\hatz) = mp$, where $p = \left( 1 - \frac{1}{m} \right)
^ {nk}$. From Theorem 1 of~\cite{Mitzenmacher}, we have that
$$
\pr\left[ | \hatz - mp | > \epsilon m \right] < 2 e^{\frac{- \epsilon^2 m^2}{2nk}}
$$
We set $\epsilon = \frac{\sqrt{2nk(\log m + \log \log m +  \log
    n)}}{m}$. Then $\pr[ |\hatz - mp | > \epsilon m] < \frac{2}{nm
  \log m} = o(1)$.

The estimated size of the population of a bloom filter is $\hatn = \frac{ \log {(\hatz/m)}}{k \log \left( 1 - \frac{1}{m} \right) }$, or $\hatn = c \log (\hatz/m)$, where $c = \frac{1}{k \log \left( 1 - \frac{1}{m} \right)}$. Also, from the above bound, we have that with probability at least $1 - 2/nm\log m$,

\[mp - \epsilon m \leq \hatz \leq mp + \epsilon m.\]
Therefore, with probability at least $1 - 2/nm\log m$,
$$
c \log (p - \epsilon) \leq \hatn \leq c \log (p + \epsilon).
$$
For small values of $\epsilon$,
\[
c \left( 1 - \epsilon \right) \log p \leq \hatn \leq c ( 1 + \epsilon ) \log p.
\]
Substituing $c = \frac{1}{k \log \left( 1 - \frac{1}{m} \right)}$, and $p = \log \left( 1 - \frac{1}{m} \right) ^{nk}$,
$$
n \left( 1 - \epsilon \right) \leq \hatn \leq n \left( 1 + \epsilon \right),
$$
with probability at least $1 - 2/nm\log m$.

Returning to the setting of the BloomTree, let the number of elements in the intersection of the query Bloom filter with the root node be $K$. Similarly, let the number of elements in the intersection of the query Bloom filter with the left and right child of the root node be $K_1$ and $K - K_1$ respectively. In the sampling process, the probability of proceeding along the left child is directly proportional to the estimated number of elements in the intersection of the left child and the query Bloom filter. Let $\hatn_l$ and $\hatn_r$ be the estimated number of elements in the left and right child of the root node respectively.
$$
\pr[\mbox{selecting the left child}] = \frac{\hatn_l}{\hatn_l + \hatn_r}
$$
With probability at least $1 - 4/nm\log m$,
$$
\frac{\left( 1 - \epsilon \right) K_1}{\left( 1 + \epsilon \right) K_1
  + \left( 1 + \epsilon \right) (K - K_1)} \leq \frac{\hatn_l}{\hatn_l + \hatn_r},  
$$
and
\[\frac{\hatn_l}{\hatn_l + \hatn_r} \leq \frac{\left( 1 + \epsilon \right) K_1}{\left( 1 - \epsilon \right) K_1 + \left( 1 - \epsilon \right) (K - K_1)},\]
i.e.,
$$
 \frac{\left( 1 - \epsilon \right) K_1}{\left( 1 + \epsilon \right) K}
 \leq \frac{\hatn_l}{\hatn_l + \hatn_r} \leq  \frac{\left( 1 + \epsilon \right) K_1}{\left( 1 - \epsilon \right) K}.
$$
Now, for $|\epsilon| < 1/2$, $(1-\epsilon)/(1 + \epsilon) \geq 1 -
2\epsilon$ and $(1 + \epsilon)(1 - \epsilon) \leq 1 + 4\epsilon$ and
so 
\[
( 1 - 2\epsilon )\frac{K_1}{K} \leq \frac{\hatn_l}{\hatn_l + \hatn_r}
\leq ( 1 + 4\epsilon )\frac{K_1}{K}
\] 
Consider now a given leaf node $L$ of the Bloom Tree that contains
a subset $S'$ of size $\ell$. The probability that an unbiased process
of sampling should reach this leaf is $\ell/n$. We now estimate what
the probability of reaching this leaf is in the \mbox{\tt BloomTreeSample} method.

Consider the path down the BloomTree to the given leaf: $L_0, L_1,
\ldots L_{\log M/\leafsize} = L$ and the sequence of subsets of $S$ stored in
these nodes be $S = S_0, S_1, \ldots, S_{\log M/\leafsize} = S'$. Then we have
that
\[ \frac{|S_1|}{|S_0|} \cdot\frac{|S_2|}{|S_1|} \cdots \frac{|S_{\log
    M/\leafsize}|}{|S_{\log M/\leafsize -1}|} = \frac{\ell}{n}.\]

Now, using the analysis done above, we can argue that for the choice
of $\epsilon$ above, the probability of moving from $S_i$ to $S_{i+1}$
is at least $(1 - 2 \epsilon) |S_{i+1}|/|S_i|$ and at most $(1 + 4 \epsilon) |S_{i+1}|/|S_i|$ with probability $ 1 -
4/nm\log m$. Repeating this argument over $\log M/\leafsize$ levels we get that 
\[(1 - 2\epsilon)^{\log M/\leafsize} \cdot
\frac{\ell}{n} \leq \pr[\mbox{\mbox{\tt BloomTreeSample} reaches }L],\]
and 
\[\pr[\mbox{\mbox{\tt BloomTreeSample} reaches }L] \leq (1 + 4\epsilon)^{\log M/\leafsize} \cdot
\frac{\ell}{n},\]
with probability at least 
\[ 1 - \frac{4 \log {M/\leafsize}}{nm \log m} \geq  1 - \frac{4}{n \log m}.\]

, where the last inequality applies because $\log{M/\leafsize} < m$ whenever $f(m) \rightarrow 0$ as $m \rightarrow \infty$.
Since the set $S$ can be present in at most $n$ such leaf nodes, hence
for any leaf node containing an element of $S$, the probability that
\mbox{\tt BloomTreeSample} reaches that leaf node is at least 
\[ 1 - \frac{4}{\log m}.\]
Now, we recall that $(1 + x) <
e^x$ and for $x < 1/2$, $e^{-x/2} < (1 - x)$ and so 
In other words, with high probability 
\[e^{-\epsilon \log M/\leafsize} \cdot
\frac{\ell}{n} \leq \pr[\mbox{\mbox{\tt BloomTreeSample} reaches }L] ,\]
and
\[\pr[\mbox{\mbox{\tt BloomTreeSample} reaches }L] \leq e^{4\epsilon \log M/\leafsize} \cdot
\frac{\ell}{n},\]
for any leaf $L$ that contains an element of $S$. 
Now, whenever $\epsilon \log M/\leafsize$ is $o(1)$, i.e., it goes to 0 as $m$
grows, both the bounds go to 1 proving the result.
\end{proof}

\paragraph*{Discussion on sample quality} We note that since $f(m)$ grows faster than $\epsilon(m)$, the condition that $f(m) \rightarrow 0$ as $m \rightarrow  \infty$ implies that $\epsilon(m) \rightarrow 0$ as $m \rightarrow \infty$. 
\eat{The proof of this proposition uses a tail
bound on the number of 1s in a Bloom filter proved by
Mitzenmacher~\cite{Mitzenmacher} to show that the bias estimated by
{\tt BSTSample} is close to the correct value all the way down
the tree.} The import of Proposition~\ref{prp:uniformity} is that the eventual leaf
that {\tt BSTSample} selects for sampling is chosen with
probability that is very close to being proportional to the number of
elements of the set $S$ that belong to the segment of the name space
stored in that leaf. In the absence of false positives, which happens
in the limit as the size of the Bloom filter $m \rightarrow \infty$,
this would lead to perfectly uniform sampling.\eat{ In
Section~\ref{sec:expt} we provide experimental evidence to further
back up our claim that {\tt BSTSample} provides uniform samples
up to a small error.}

We now move on to analysing the running time of the algorithm. Clearly
the running time depends on a number of factors. We provide a
theoretical analysis of the number of nodes that {\tt BSTSample}
visits as it moves down the tree to reach a leaf from where it
generates a sample. Clearly the lower bound for this number is the
height of the tree, i.e., $\log M/\leafsize$. We are not able to match
this lower bound but we are able to control the number of extra nodes
visited and give a result which can guide us on how to choose our
system's parameters to ensure a good asymptotic running time. We show
the following result:
\begin{proposition}
\label{prp:bloom-time}
The expected number of \bst{} nodes visited by the algorithm {\tt
  BSTSample} when sampling from a set of size $n$ using a
\bst{}\  $\CT(M,m,H,M_\perp)$ where $|H| =k$ is
\[O\left( \log \frac{M}{\leafsize} + \frac{M k^2n}{m} \right).\]
\end{proposition}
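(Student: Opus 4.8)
The plan is to split the visited nodes into two groups and bound each: the nodes lying on the single root-to-leaf path that {\tt BSTSample} commits to and along which it returns its (non-$NULL$) sample, and the nodes visited inside ``false-positive'' subtrees that the algorithm enters and then abandons. For the first group I would argue it contributes exactly $\log(M/\leafsize)+1$ nodes. At every internal node the algorithm recurses into only one child and comes back to the sibling only when the first call returns $NULL$; a subtree returns $NULL$ only when no element of its namespace segment passes the membership test of $\CB$, so the algorithm never abandons a subtree that contains a positive of $\CB$. Since $S$ (hence at least one positive) is stored in $\CB$, such a path down to a leaf exists, and the algorithm passes through exactly one node per level on the way to the leaf it samples from. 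Every other visited node therefore lies in a subtree all of whose namespace segments are disjoint from the positives of $\CB$ and which was explored only because of a false set overlap.

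For the second group I would proceed level by level. Fix a level $i$; a node $v$ there whose segment $R_v$ has size $M/2^i$ and is disjoint from the positives of $\CB$ is recursed into only if its parent was visited and $\CB(R_v)\cap\CB\neq\emptyset$, which by Equation~(\ref{eq:fso}), applied with the disjoint sets $S$ (size $n$) and $R_v$ (size $M/2^i$), happens with probability at most $1-(1-1/m)^{k^2 n M/2^i}\le k^2 n M/(2^i m)$ (for levels near the root with $M/2^i\ge m$ this bound is vacuous and I just use probability $1$). Writing $a_i$ for the expected number of such false nodes visited at level $i$, the number of visited level-$i$ nodes is at most $1+a_i$ up to an additive $O(1)$ for the at-most-one empty child the backtracking branch may touch per abandoned subtree, and each visited node has two children, so I get the recursion $a_{i+1}\le 2(1+a_i)\min(1, k^2 n M/(2^{i+1}m))$ with $a_0=0$ (the root's segment is the whole namespace, so it is not a false node). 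Unrolling: the levels with $2^i\lesssim k^2 n M/m$ contribute a geometric sum that is $O(k^2 n M/m)$, and beyond that point the factor $\min(\cdot)$ halves at each level, forcing $a_i$ to decay super-geometrically, so $\sum_i a_i=O(k^2 n M/m)$ (and $O(1)$ when $k^2 n M/m<1$). Adding the height from the first group gives the claimed $O(\log(M/\leafsize) + M k^2 n/m)$.

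The main obstacle is making the recursion for $a_{i+1}$ rigorous: ``$v$ has a false set overlap with $\CB$'' and ``the parent of $v$ was visited'' are both determined by the same random hash functions and so are not independent, and the $a_i$ at different levels are correlated for the same reason. I would therefore argue purely through linearity of expectation, writing $a_i=\sum_{v\text{ at level }i}\pr[v\text{ visited}]$ and bounding each term by the product of ``parent reachable'' times the FSO probability from Equation~(\ref{eq:fso}), rather than multiplying probabilities down a branch; one must also check that the estimator $\hat{S}^{-1}$ returning a non-positive value corresponds exactly to an all-zero intersection bit-vector, so that pruning is sound and non-emptiness of an intersection propagates toward the root. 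A secondary point to verify is that diving into and then abandoning a large false-positive subtree cannot blow up the bound: the probability of descending into such a subtree at a node is proportional to its small estimated size relative to the live branch's, which is exactly what confines the expected detour cost to the $M k^2 n/m$ term.
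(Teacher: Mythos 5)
Your proposal is correct and follows essentially the same route as the paper's proof: you charge the $\log(M/\leafsize)$ term to the unique successful root-to-leaf path, bound entry into disjoint ("false") subtrees via Equation~(\ref{eq:fso}) linearized to roughly $k^2 n M/(2^i m)$ at depth $i$, and split the tree at the critical depth $\approx \log(Mk^2n/m)$ where this quantity drops below a constant, paying for everything above it ($O(Mk^2n/m)$ nodes) and exploiting subcritical decay below it -- the only difference being that the paper packages the decay as a per-subtree branching-process claim (Claim~\ref{clm:half}) while you unroll an equivalent level-indexed expectation recursion. The dependence issue you flag (overlap events at a node and its ancestors share the same hash functions) is present but unaddressed in the paper's own argument as well, so your proof is at the same level of rigor as the published one.
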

\begin{proof}[of Proposition~\ref{prp:bloom-time}]
The path {\tt BloomTreeSample} takes down to the leaf node from which
it generates a sample has length equal to the height of the tree,
i.e., $\log (M/\leafsize)$. Along this way, there are many branches
that are caused by false set overlaps, i.e., the intersection Bloom
filter has at least $k$ bits set but {\em no element of the name space
  has all $k$ of its bits set} in this Bloom filter. These branches
need to be followed since it is not possible to distinguish them from
the branch that takes us to genuine true or false positives of the set
stored in the Bloom filter that we are sampling from. In order to
estimate the number of nodes visited over and above the $\log
(M/\leafsize)$ in the true path, we need to estimate how many such
false set overlap nodes we visit. We proceed by showing that below a
certain depth in the BloomTree a false set overlap branch leads to a
constant number of false set overlap nodes being visited below
it. Hence, below that depth the number of extra nodes being visited is
just a constant factor more than the necessary nodes being visited
along the true path. Above that level, however, we cannot make any
such claim so we just assume we visit every node.

Formally, we proceed by making the following claim:
\begin{claim}
\label{clm:half}
Given a node $u$ of depth $d$ in a BloomTree $T$ in which
$\leafsize$ names are stored at every leaf, and a query set $S$ such
that $S\cap M[u] = \emptyset$ where $M[u]$ denotes the subset of the
namespace $M$ that is represented in the subtree of $T$ rooted at $u$,
if the probability that the intersection of two Bloom filters of size
$m$ containing $S$ and $M[u]$, i.e., the quantity
\[\alpha_{S}(d) = 1 - \left( 1 - \frac{1}{m}
\right)^{k^2n|M[u]|},\]
is at most 1/2, and if $L(d)$ is the number of nodes that algorithm
{\tt BloomTreeSample} visits in the subtree of $u$ conditioned on the
event that it reaches
$u$, then 
\[ \ex(L(d)) = \frac{\alpha_S(d)}{1 - 2\alpha_S(d)}.\]
\end{claim}
\begin{proof}
The proof of the claim follows by noting that if $\alpha_{S}(d) = 1 -
x$, then $\alpha_{S}(d - 1) = 1 - \sqrt{x}$, since the number of names
stored in any node of a child is exactly half of those stored by its
parent in the BloomTree. Noting further that it is only if the
sampling algorithm finds an overlap at $u$ that it goes into the
subtrees of $u$'s children, we get
\[\ex(L(d)) = \alpha_S(d) \left[1 + 2 L(d-1)\right] ,\]
Taking expectations repeatedly we get 
\[\ex[L(d)] = (1 - x) + 2(1 - x)(1 - x^{\frac{1}{2}}) +
4(1-x)(1-x^{\frac{1}{2}})(1 - x^{\frac{1}{4}}) \ldots,\]
summed till the leaf level, 
where $x = 1 - \alpha_S(d)$. Since $x < 1$, $1 - \sqrt{x} < 1 - x$,
which allows us to say that 
\[\ex[L(d)] = \frac{1}{2}\sum_{i=1}^{\log(M/\leafsize) - d}
(2\alpha_S(d))^i,\] from where the result follows whenever
$\alpha_S(d) < 1/2$. Note that this argument is basically the same as
saying that the sampling algorithms visits to nodes below a level $d$
for which $\alpha_S(d) < 1/2$ is dominated by a subcritical branching
process, which goes extinct with probability 1 and yields $1/(1-\mu)$
nodes in expectation, where $\mu$ is the mean of the progeny
distribution (c.f.~\cite{athreya:1972})
\end{proof}
Noting that since $|M[u]|$ = $M/2^d$ whenever the depth of $u$ is $d$, the condition that $\alpha_S(d) < 1/2$ is satisfied whenever
\[d > d^* =  \log  \left(\frac{Mk^2n}{m \ln 2}\right),\] and that the
BloomTree up to $d^*$ levels contains $2^{d^*+1} - 1$ nodes, we get
the result.
\end{proof}

\paragraph*{Discussion on running time}

Looking at the result we note that the ratio of name space to Bloom filter size is a critical element in the number of nodes visited, i.e., raising the number of bits in the Bloom filter will benefit the running time (at least up to the point where the second term in the running time analysis continues to dominate the first term). The number of hash functions used and the size of the set being sampled from are also correlated with the running time, which follows intuition.

\subsection{Determining the empty intersection}
\label{sec:practical}
The \bst{} data structure and the algorithm for sampling are both straightforward to implement. However, one practical problem we encounter is that \eat{we do encounter two practical problems. First, }at each node that the algorithm visits, a set intersection needs to be performed that determines whether to prune that branch or not. Unfortunately, there is no reliable way to determine that the size of a set intersection is empty, since even a single set bit results in a non-zero size estimation. Therefore, we use \emph{thresholding} to overcome this problem. That is, if the estimated size of the set intersection is below a particular threshold, we consider the intersection to be NULL. Note that this heuristic can potentially affect the theoretical guarantee offered in Proposition~\ref{prp:uniformity}, but in effect it will not since the probability of making a wrong decision here, i.e. assuming a set if empty when in fact it is not, is very small if we choose the correct threshold. A wrong decision here implies that certain elements of the set are never presented as samples, but, as we see in Section \ref{sec:expt:sampling}, this does not happen in practice.

\section{Reconstruction with \bst{}}
\label{sec:recons}
A recursive traversal of the tree results in a reconstruction of the
set. Given a Bloom filter $\CB$, if the intersection with $\CB$ at a
non-leaf node is empty, then the reconstructed set at this node is the
empty set, and the subtree rooted at this node can be pruned from the
search. However, if the intersection is not empty, then the search
continues along the left and the right children and the final
reconstructed set at this node is the UNION of the reconstructed sets
obtained from the two child nodes.

If the intersection with $\CB$ at a leaf node is not empty, we conduct a
brute force search on the range of this node as before. However,
instead of sampling a value from the set of elements thus obtained, we
return the set itself as the reconstructed set at this node.

\eat{\begin{algorithm}[tb]
\SetAlgoLined
\SetKwFunction{ReconstructionBST}{ReconstructionBST}
  \SetKwProg{myalg}{Algorithm}{}{}
  \myalg{\ReconstructionBST{bT:\bst{},b:Bloom filter}}{
 $[s,e] \leftarrow bT . range$\;
 \eIf{$bT$ is leaf}{
  $S \leftarrow \phi$ \;
  \For {$i$ in $(s,e)$}{
   \If{$i \in b$}{
    $S \leftarrow S \cup \{i\}$ \;
   }
  }
  \KwRet $S$ \;
  }{
    lFlag $\leftarrow$ ($b\cap bT . left$) \;
    rFlag $\leftarrow$ ($b\cap bT . right$) \;
    \uIf{($not$ lFlag $and$ $not$ rFlag)}{
       \KwRet $\phi$ \;
    }
    \uElseIf{($not$ lFlag)}{
       \KwRet \ReconstructionBST{$bT . right,b$}\;
    }
    \uElseIf{($not$ rFlag)}{
       \KwRet \ReconstructionBST{$bT . left,b$}\;
    }
    \uElse{
         $l \leftarrow $ \ReconstructionBST{$bT . left,b$} \;
         $r \leftarrow $ \ReconstructionBST{$bT . right,b$} \;
         \KwRet $l \cup r$ \;
    }
  }
  }
 \caption{Reconstruction with \bst{}}
\label{alg:BloomReconstruct}
\end{algorithm}}
We note that the expected number of nodes of the \bst{} visited by the reconstruction algorithm can be analysed in a manner similar to {\tt BloomTreeSample}. This expected number will come to \[O\left( n \cdot \left(\log \frac{M}{\leafsize} + \frac{\leafsize k^2}{m} \right) \right).\] We note that extracting a single element of a set from the treelike structure of the \bst{} would take $\log M/\leafsize$ and, in the worst case, assuming the different elements of the set are widely distributed in the name space, the worst case number of nodes visited for reconstruction will be $n \log M/\leafsize$ at least. Our algorithm is, unlike in the case of sampling, able to meet this lower bound exactly in an asymptotic sense. Also, since the second term above is directly proportional to $k^2$ and $\leafsize$ and inversely proportional to the size of the Bloom filters used, $m$, we can choose these parameters appropriately to minimize the time taken.
 
\section{Experimental Evaluation with Static Namespace}
\label{sec:expt}
In this section, we describe several experiments that we conducted to determine the effectiveness of our techniques for both sampling as well as reconstruction when the namespace is static. In Section \ref{sec:expt-dynamic}, we describe our experiments where only a fraction of the namespace is actually used.

\subsection{Setup}
\begin{table}
\caption{Parameters of our experiments}
\begin{center}
\scalebox{0.9}{
\begin{tabular}{p{4cm}|p{4.5cm}}
\toprule
{\bf Parameter} & {\bf Range (Default value)}\\
\midrule
Size of the namespace ($M$) & $10^5$ -- $10^7$ ($10^7$)\\
Size of the query set ($n$) & $100$ -- $50,000$ ($1000$)\\
Sampling accuracy($s$)& $0.5$ -- $1.0$ ($0.9$)\\
Hash families & Simple $((ax+b)\%m)$, Murmur3, MD5\\
\bottomrule
\end{tabular}
}
\label{tab:params}
\end{center}
\end{table}
We experimented with both synthetic as well as real datasets. We made extensive use of synthetic datasets to generate controlled micro-benchmarks. We varied the namespace size ($M$), from which the elements are drawn, between $10^5$-$10^7$. We also varied the size of the sets ($n$) between $100$ to $50,000$, and generated them either by uniformly sampling from the namespace or by forming random local clusters (details below). As we pointed out earlier in Section \ref{sec:summary}, the desired accuracy levels can be used to determine the size of the Bloom filter $m$ to construct the \bst. We varied the accuracy requirements between $0.5$ -- $1.0$ and accordingly designed the Bloom filters. For simplicity of experiments we kept the number of hash functions to $3$, although we experimented with different classes of hash functions viz., simple, Murmur3 and MD5. Table~\ref{tab:params} summarizes the parameter choices used in our experiments. Unless mentioned explicitly, the default values of parameters mentioned in this table were used in our experimental evaluations.

\paragraph*{Generating clustered and uniform query sets}
We report results on two kinds of randomly generated query sets. Uniform sets are constructed by generating elements uniformly at random, without replacement, from a given range.

The idea for generating clustered query sets comes from the observations in Web graphs where neighbour sets of vertices typically have their ids clustered around a few nodes~\cite{boldi14}. \eat{ For generating clustered query sets, the following procedure was employed: Start by sampling an element using the uniform distribution. Once an element $s$ is sampled, determine the closest elements $(x,y)$ on either side of $s$ that have not yet been sampled. Divide the probability mass of $s$ equally into $x$ and $y$. Also, add to $x$ and $y$ $0.1$ times the accumulated probability mass of all other remaining elements in order to generate a more aggressively clustered set. }
To generate clustered query sets, $n$ elements were iteratively sampled from the namespace using a $pdf$ that is updated after each sample is drawn. Initially, the $pdf$ begins as that of the uniform distribution. After a sample $s$ is drawn, we identify $x = max\{i|i<s, pdf(i)>0\}$ and $y = min\{i|i>s, pdf(i)>0\}$ as the neighbors of $s$. We divide the $pdf(s)$ equally into $pdf(x)$ and $pdf(y)$, and set $pdf(s)=0$. To generate more aggressively clustered sets, one can subtract $p \%$ from the probability of each element and equally divide the accumulated probability into $x$ and $y$. Here, $p$ controls the degree of clustering. For our experiments, we have used $p = 10$.

\paragraph*{Algorithms}
Our baseline method is the brute-force \emph{dictionary attack} (referred to as $DA$). Additionally, for evaluating the performance of set reconstruction we use \emph{HashInvert} ($HI$). Both these baselines are described in Section~\ref{sec:sampling}. These methods are compared against our \emph{\bst{}} ($BST$) approach. 

\paragraph*{Metrics and Methodology}
We report on the following: 
\begin{asparaitem}
\item \textbf{Number of intersections and set membership operations}: This is our main metric of interest. We compute the depth of \bst{} and the size of the Bloom filters based on accuracy and the relative costs of intersection and membership operations, as discussed in Section \ref{sec:summary}.  For uniform and clustered query sets, we report the average number of intersection and membership operations on Bloom filters over $10,000$ samples. 
\item \textbf{Average time taken}: With the same setup as above, we report on the average time over $10,000$ samples.
\item \textbf{Memory}: We analytically computed the overall size required by the \bst{} based on the size of the Bloom filters used and the tree depth.
\item \textbf{Quality of uniform samples}: We report the $\chi^2$-statistic for the samples generated. In addition, we also show the empirically observed distribution of samples.
\end{asparaitem}

\subsection{Sampling Experiments}
\label{sec:expt:sampling}

\begin{figure}[htb]
\centering
		\subfloat[][$M=10^5$]{
			\includegraphics[width=0.6\columnwidth]{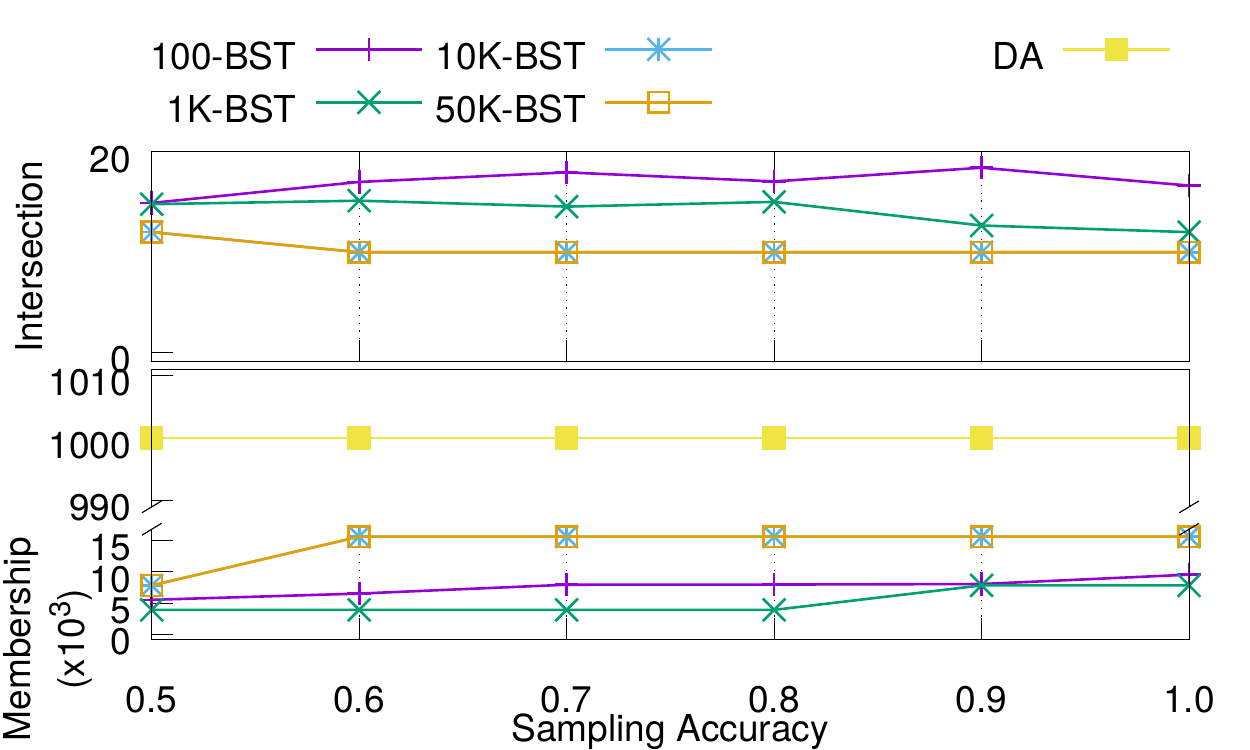}
			\label{fig:S_u_5}
		} \\
		\subfloat[][$M=10^6$]{
			\includegraphics[width=0.6\columnwidth]{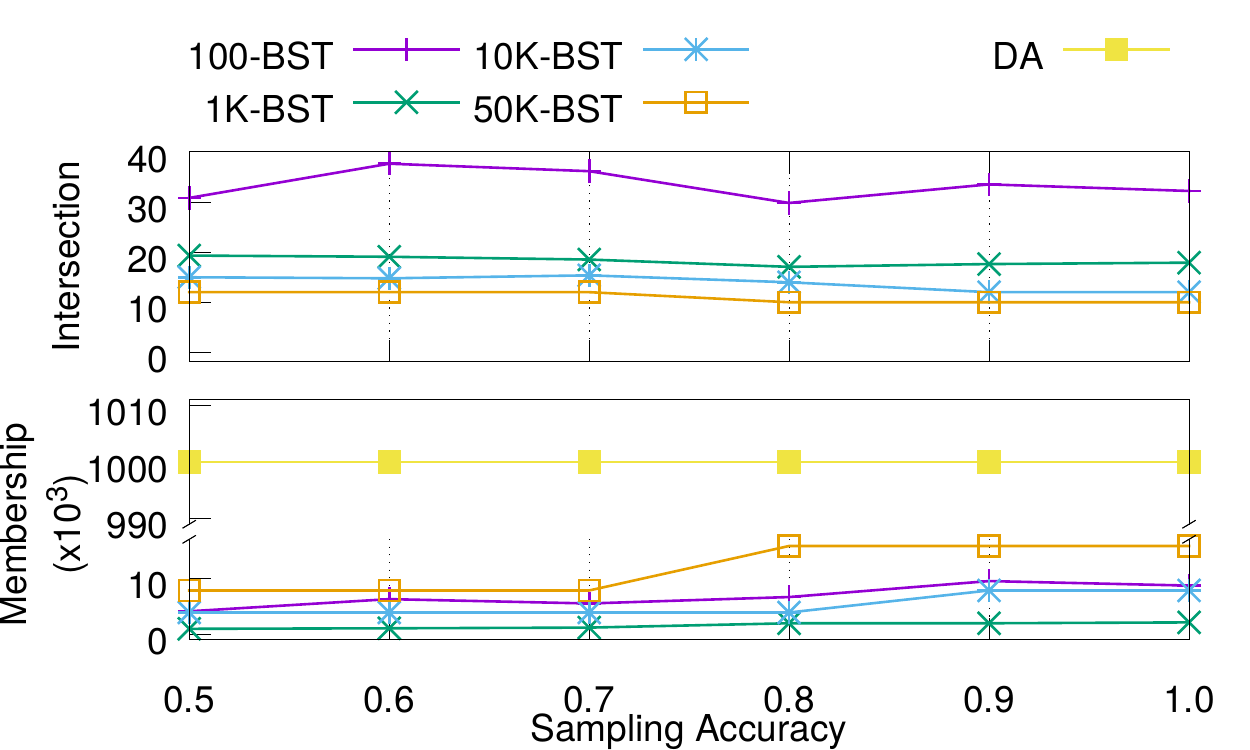}
			\label{fig:S_u_6}
		} \\

		\subfloat[][$M=10^7$]{
			\includegraphics[width=0.6\columnwidth]{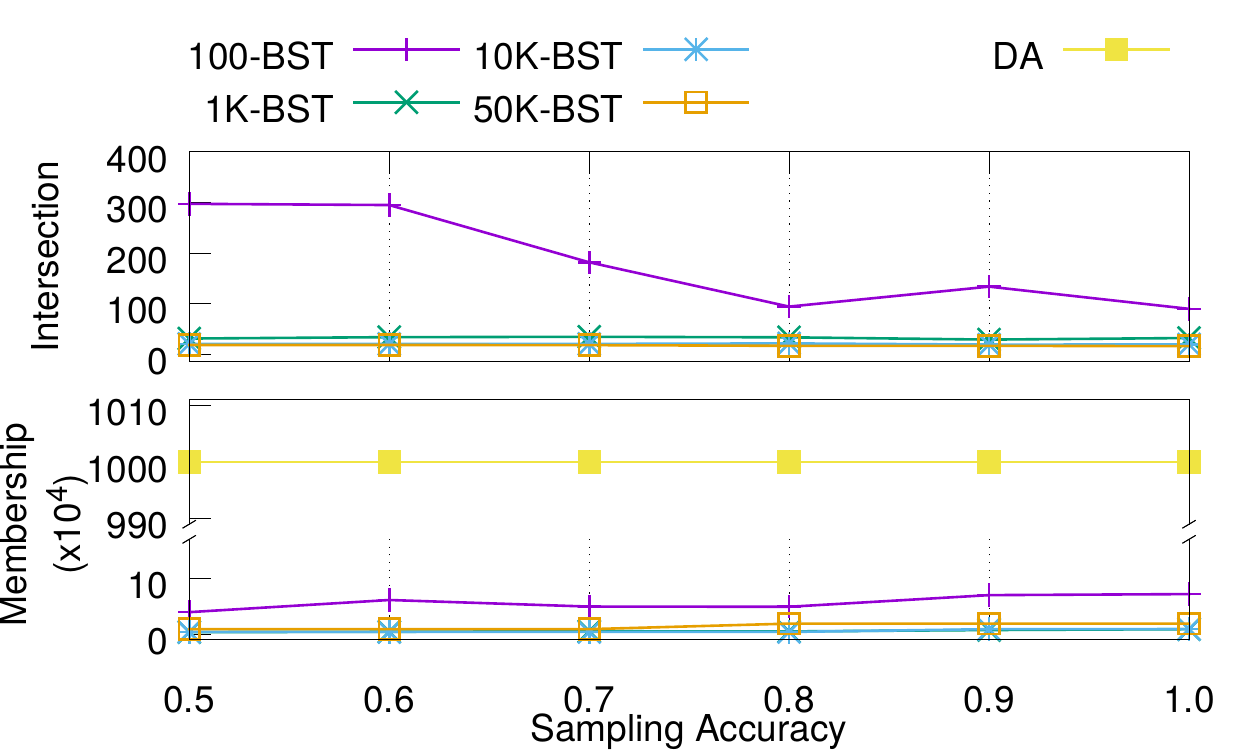}
			\label{fig:S_u_7}
		}
	\caption{No. of intersections and set membership queries for \emph{uniformly random} query sets. $XXX$-BST in the legend refers to the cardinality of the query sets.}
	\label{fig:S_u}
\end{figure}
\begin{figure}[htb]
\centering
		\subfloat[][$M=10^5$]{
			\includegraphics[width=0.6\columnwidth]{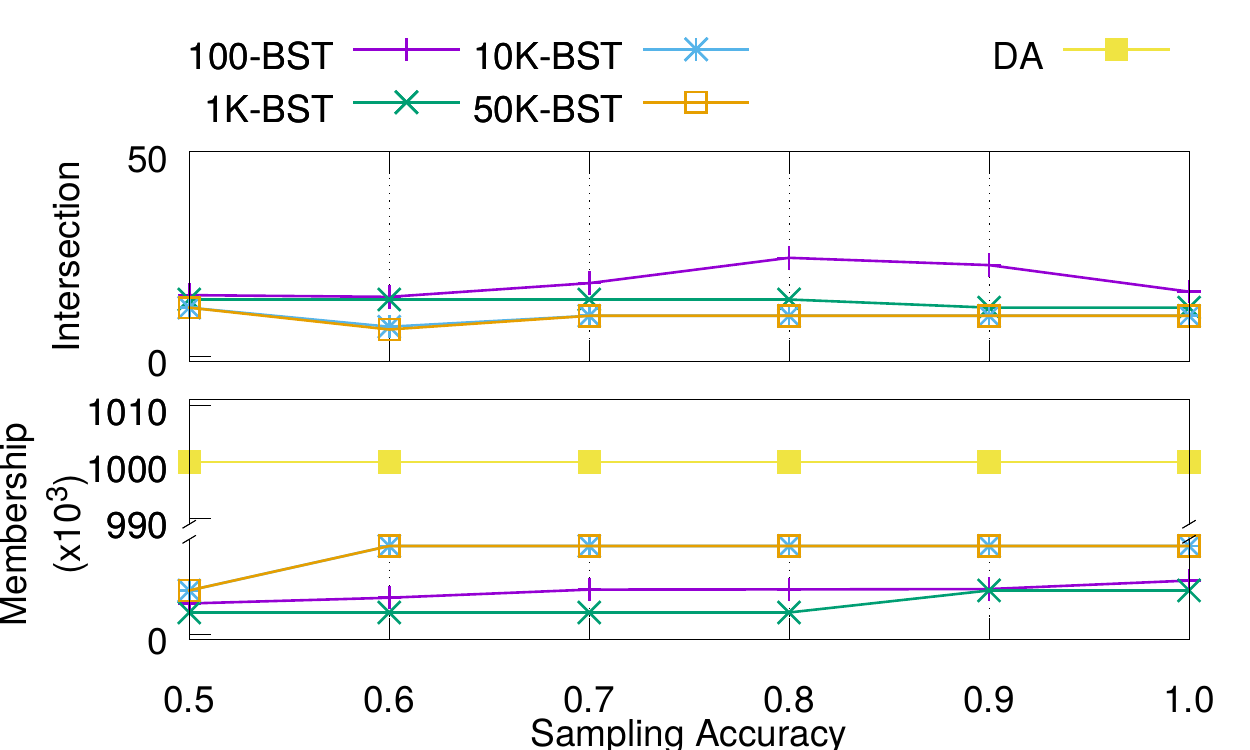}
			\label{fig:S_s_5}
		}\\
		\subfloat[][$M=10^6$]{
			\includegraphics[width=0.6\columnwidth]{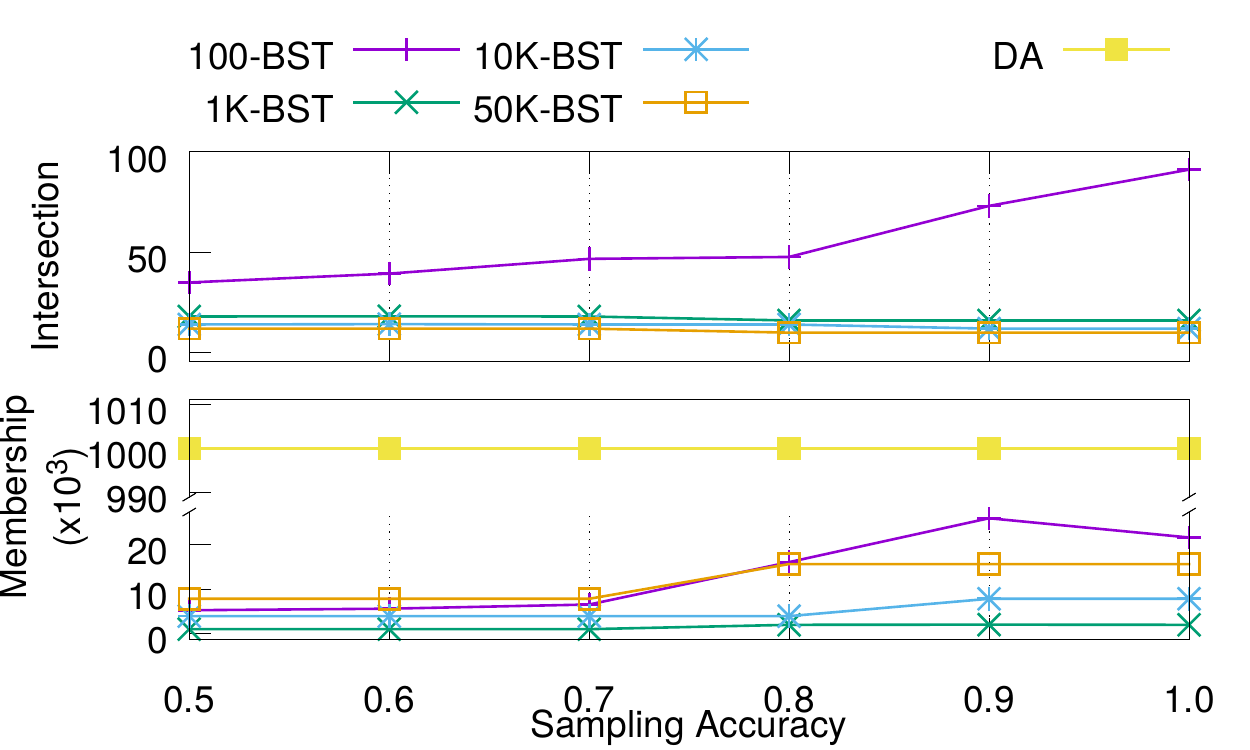}
			\label{fig:S_s_6}
		}\\
		\subfloat[][$M=10^7$]{
			\includegraphics[width=0.6\columnwidth]{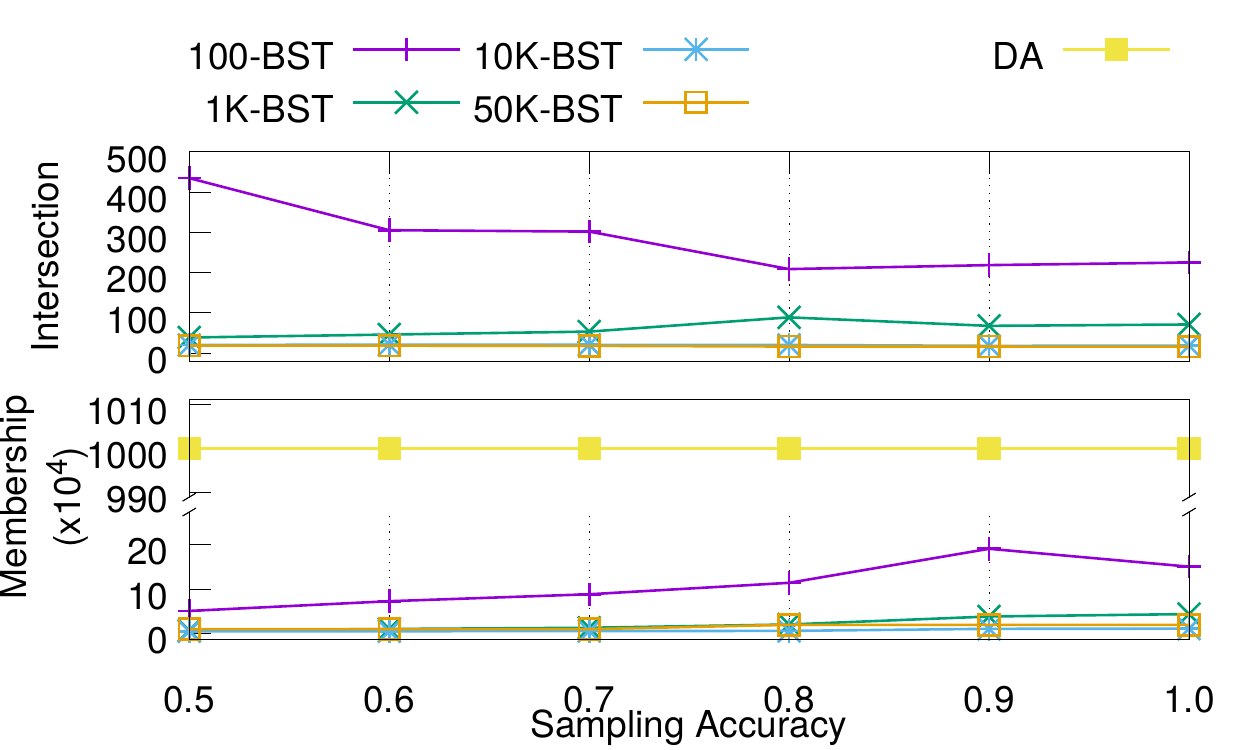}
			\label{fig:S_s_7}
		}
	\caption{No. of intersections and set membership queries for \emph{clustered} query sets. $XXX$-BST in the legend refers to the cardinality of the query sets.}
	\label{fig:S_s}
\end{figure}

Figures \ref{fig:S_u} and \ref{fig:S_s} show the number of intersections and membership operations over \emph{uniformly random} and \emph{clustered} query sets respectively. The DA method always uses $M$ membership operations and no intersection operations. On the other hand, {\bst{}}s try to offset a large number of membership operations with few intersections of Bloom filters. Note that as the sampling accuracy increases, the size of Bloom filters, $m$, increases as well resulting in more expensive intersection operations. 

\paragraph*{Runtime Performance} Both intersection and membership operations become more expensive -- intersections more so than membership operations -- as the Bloom filter size increases. The Bloom filter size, in turn, is determined by the namespace size as well as the accuracy requirements. Thus, the overall efficiency of \bst{} depends on careful balance of the number of these operations. 
	Figure~\ref{fig:S_t_7} shows the average time taken by \bst{} and DA methods, for a namespace size of $10$-million.
As these plots show, {\bst{}}s achieve \neha{$5\times-100\times$} improvements in efficiency over DA for a single sampling round. 

\begin{figure}[htb]
\centering
		\subfloat[][Uniformly random query set]{
			\includegraphics[width=0.6\columnwidth]{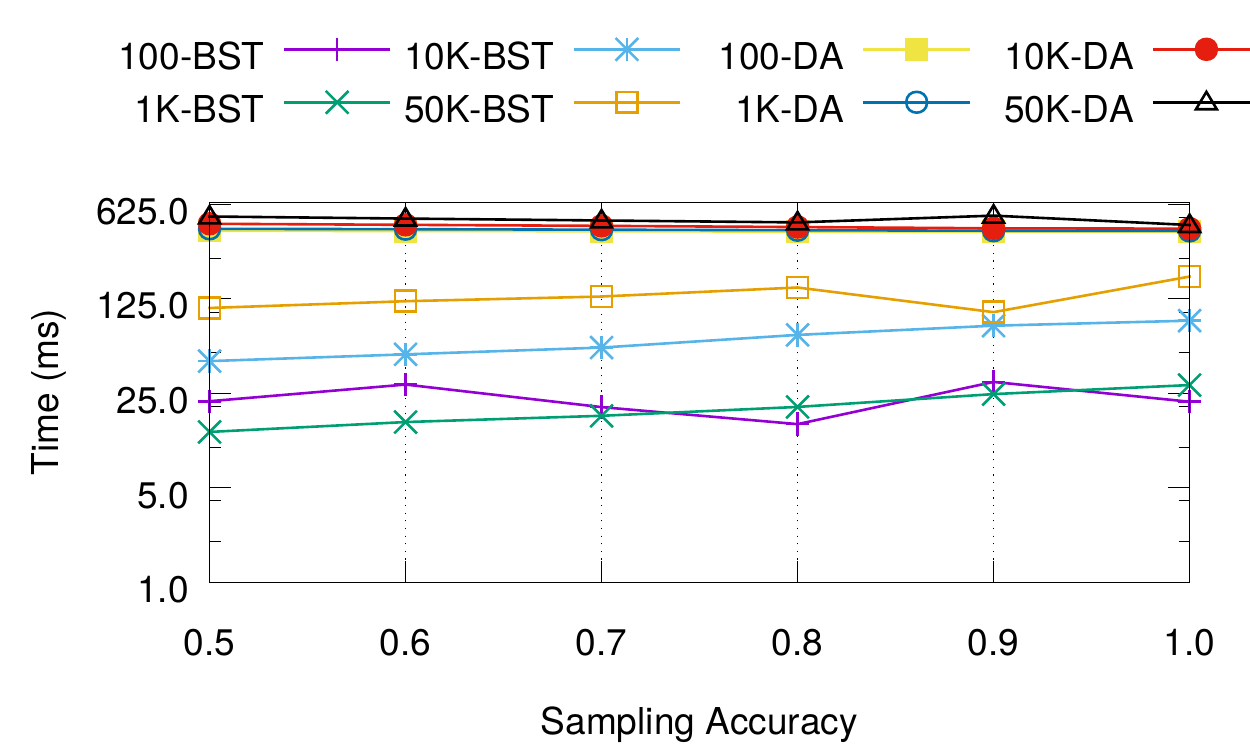}
			\label{fig:S_tu_7}
		}\\
		\subfloat[][Clustered query set]{
			\includegraphics[width=0.6\columnwidth]{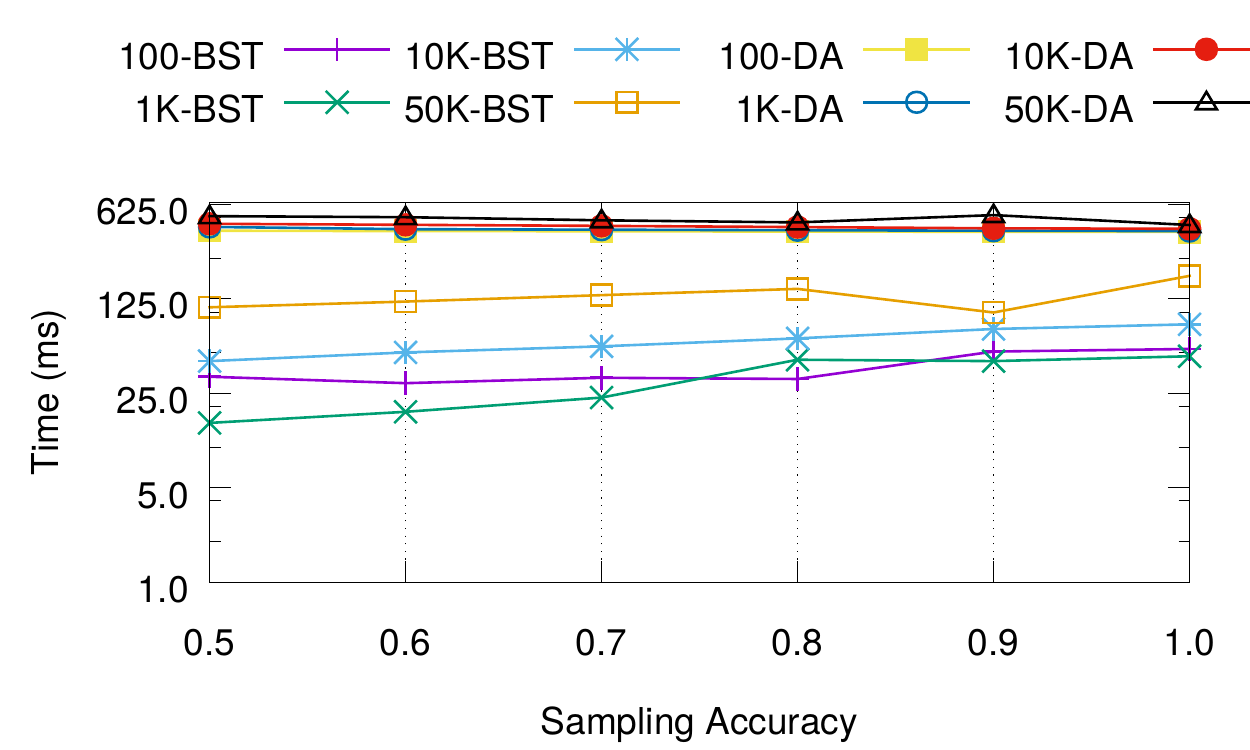}
			\label{fig:S_ts_7}
		}
	\caption{Avg. time taken for sampling with $M = 10^7$}
	\label{fig:S_t_7}
\end{figure}

\begin{figure}[htb]
\centering
		\subfloat[][Uniformly random query set]{
			\includegraphics[width=0.6\columnwidth]{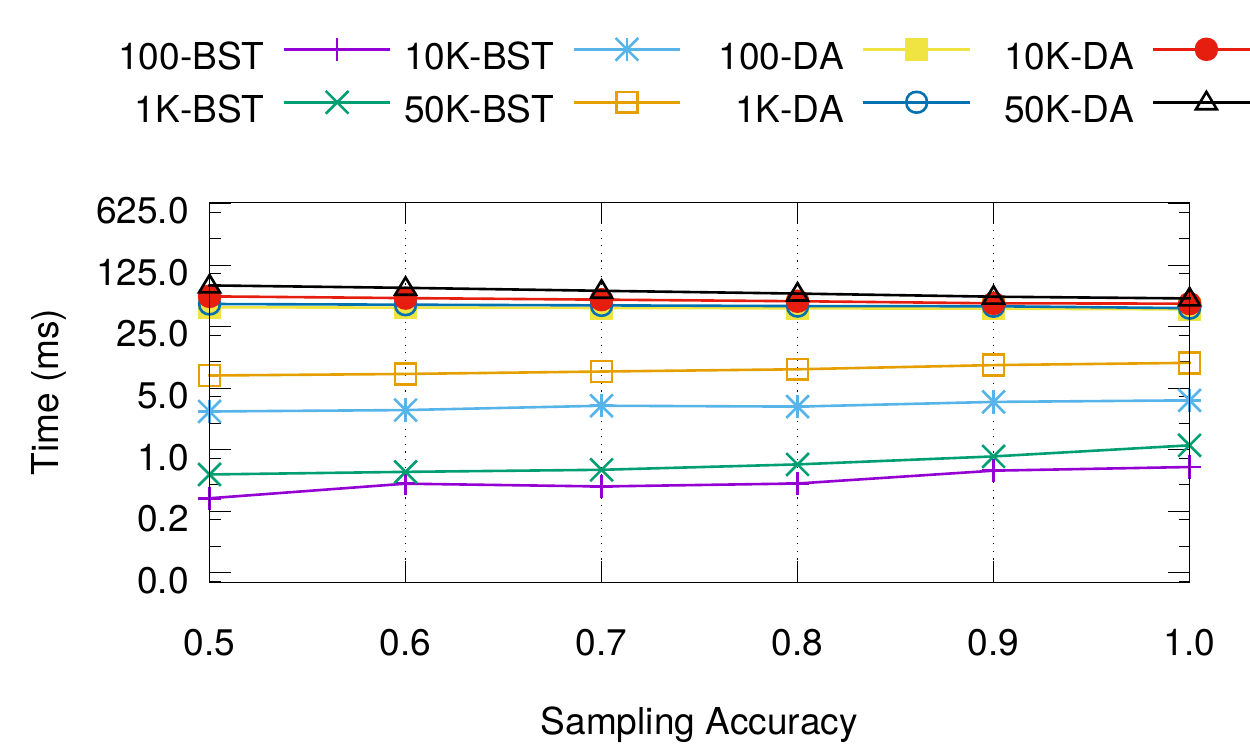}
			\label{fig:S_tu_6}
		}\\
		\subfloat[][Clustered query set]{
			\includegraphics[width=0.6\columnwidth]{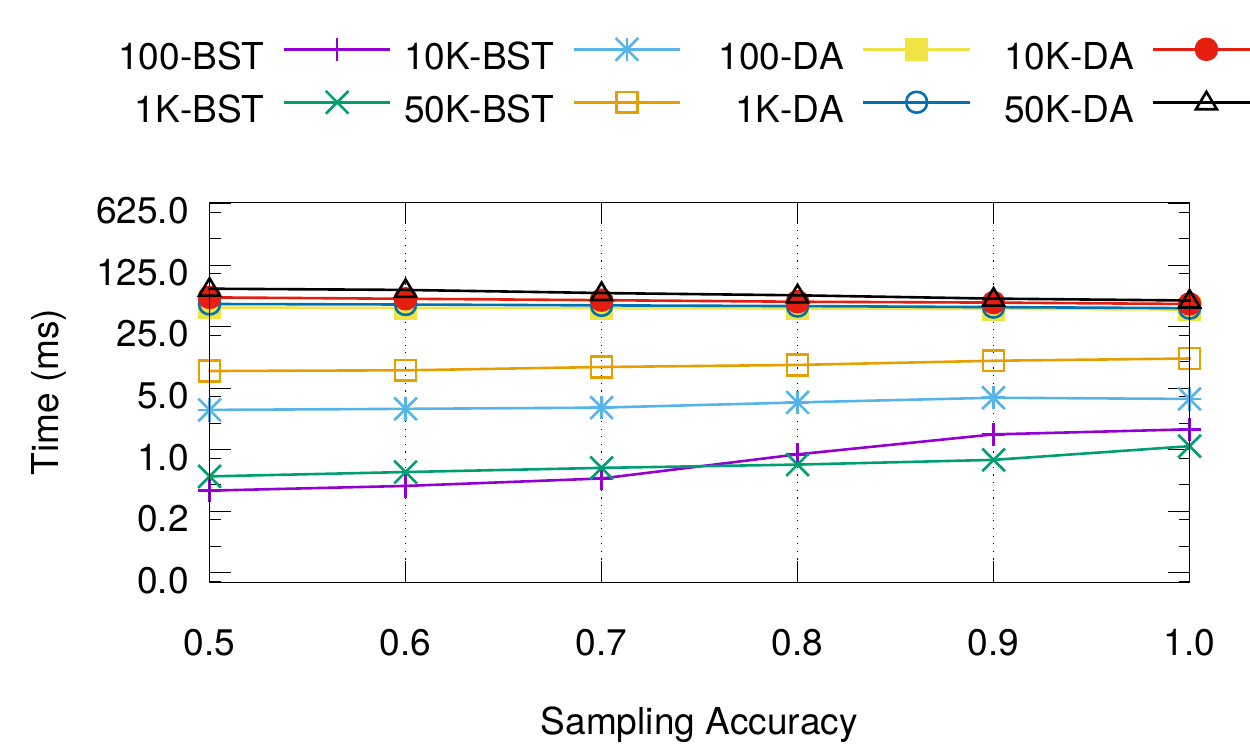}
			\label{fig:S_ts_6}
		}
	\caption{Avg. time taken for sampling with $M = 10^6$}
	\label{fig:S_t_6}
\end{figure}

\begin{figure}[htb]
	\centering
	\includegraphics[width=0.6\columnwidth]{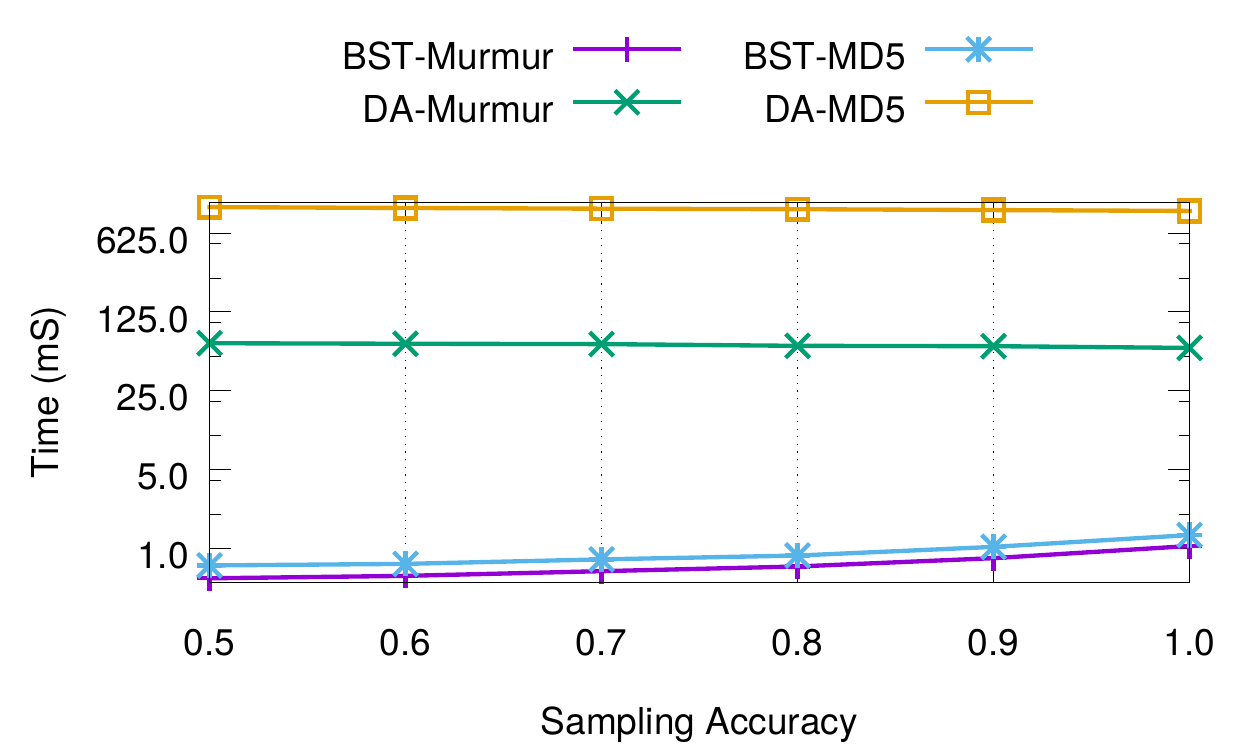}
	\caption{Effect of different hash function families on performance}
	\label{fig:hash}
\end{figure}

Another implementation choice which can significantly affect the performance numbers is that of hash functions. Figure \ref{fig:hash} shows the time taken to generate samples with different hash function families. 

The Dictionary Attack suffers most when the cost of computing the hash function increases -- for instance with MD5-based hash functions the performance goes down by almost an order of magnitude. On the other hand, the \bst{} sampling procedure defers membership queries to lower levels of the tree, by which time most of the tree has already been pruned from the search. When using fast hash functions like Murmur3 or Simple, \bst{} automatically leverages their efficiency to reduce the overall time taken. 

\begin{table}[ht]
\centering
\begin{tabular}{|c|c|c|c|c|}
\hline
Accuracy & $m$    & Depth & $\leafsize$ & Memory$=m*$ \#nodes \\ \hline
0.5 & 28465  & 10 & 976   & 3.467 \\ \hline
0.6 & 32808  & 10 & 976   & 3.997 \\ \hline
0.7 & 38259  & 10 & 976   & 2.326 \\ \hline
0.8 & 46000  & 9  & 1953  & 2.706 \\ \hline
0.9 & 60870  & 9  & 1953  & 3.7   \\ \hline
1   & 137230 & 6  & 15625 & 1.03  \\ \hline
\end{tabular}
\caption{Various parameters settings in Bloom Sample Tree implementation for $n=10^3$ and $M=10^6$ (Memory in MBs) }
\label{tab:MemReq6}
\end{table}

\begin{table}[ht]
\centering
\begin{tabular}{|c|c|c|c|c|}
\hline
Accuracy & $m$    & Depth & $\leafsize$ & Memory$=m*$ \#nodes \\ \hline
0.5      & 63120  & 13    & 1220        & 61.62               \\ \hline
0.6      & 72475  & 13    & 1220        & 70.75               \\ \hline
0.7      & 84215  & 13    & 1220        & 82.22               \\ \hline
0.8      & 101090 & 13    & 1220        & 98.69               \\ \hline
0.9      & 132933 & 12    & 2441        & 64.87               \\ \hline
1.0      & 297485 & 10    & 9765        & 36.27               \\ \hline
\end{tabular}
\caption{Various parameters settings in Bloom Sample Tree implementation for $n=10^3$ and $M=10^7$ (Memory in MBs) }
\label{tab:MemReq7}
\end{table}

\paragraph*{Memory requirement} Finally, we turn our attention to the amount of memory footprint needed by each method. The memory requirements (in MBs) are shown in Tables \ref{tab:MemReq6} and \ref{tab:MemReq7} when the number of elements in the query set is $n=10^3$. In the Tables \ref{tab:MemReq6} and \ref{tab:MemReq7}, Depth is $\log{M/\leafsize}$ where $\leafsize$ is computed as discussed in Section \ref{sec:summary}, and memory is analytically computed using $m * $ number of nodes in the \bst{}. The memory of the \bst{} thus computed was further affirmed by empirical measurement during program execution. It is evident from this table that memory requirement might actually reduce with increasing accuracy. This is primarily because when the depth of the \bst{} decreases, the total memory occupied reduces. This is in spite of increased Bloom filter sizes, since lower levels have much larger memory footprint than higher ones. Since memory can reduce with increasing accuracy, the overall trade-off is between accuracy and memory on one hand, and running time on the other.

While the \bst{} allows for very fast sampling, it requires small additional storage than the other methods described in this paper. Moreover, one does not need to store a \bst{} for each possible query set. There is only one \bst{} for a given size of namespace, Bloom filter size, and choice of hash functions. 

\paragraph*{Quality of Sampling} We use the Pearson's chi-squared test, which we briefly describe here, to empirically validate the sample quality. We conduct $T$ sampling rounds from a Bloom filter $\CB$ storing a set $S = (S_1, S_2, \ldots S_n)$. Now, $\forall i \in [1,n]$, let $o_i$ be the number of times element $S_i$ is sampled. Similarly, let $e_i$ be the expected number of times element $S_i$ should be sampled. Our null hypothesis, $H_0$, is that the sampling is uniform, or restated, that $\forall i \in [1,n]$, $e_i = \frac{T}{n}$. The goal of the chi-squared test would be to see if the null hypothesis should be rejected given the observations $o_i$. We define a random variable $Q = \sum_{i=1}^{n} \frac{(o_i - e_i)^2}{e_i}$. Then $Q$ follows a $\chi^2$ distribution with $n - 1$ degrees of freedom. Given an observation $(o_1, o_2, \ldots o_n)$, we compute the value of $Q$. Let this value be $q$. The p-value is defined as $P(Q \geq q | H_0)$. Clearly, smaller the p-value, higher is the value of $q$, indicating greater deviation from the expectation. In other words, a smaller p-value indicates that the observation has lesser support for $H_0$. If the p-value falls below a threshold $s$, known as the \emph{significance level}, then $H_0$ is rejected, otherwise it is not. The significance level is typically set around $0.05$. We set it to  $0.08$ and use $T = 130 \times n$, the recommended sample size for this significance level \cite{stamatis2002six}. For $M = 10^6$, the p-values thus obtained are reported for sets of different sizes in Table \ref{tab:p-values}. All of the entries in this table are $ > 0.08$, and therefore the null hypothesis is not rejected in any case. For higher values of accuracy, it is clear that the distribution of the elements is close to the uniform distribution.

\paragraph*{Accuracy} While the value of $m$ was determined based on accuracy, we verified the accuracy obtained from the sampling process using the expression for measured accuracy in Table \ref{tab:params}. For all cases, measured accuracy was found to be close to the expected value. Table \ref{tab:MeasuredAccuracies} shows measured accuracy values for $n=1000$.

\emph{Creation Time:} In Table \ref{tab:bstCreation}, we measure the time taken to create the \bst{} for different sizes of the namespace and desired accuracy. Note, for the case of $M=10^5$, with increased accuracy from $0.5$ to $0.6$, the required value of $m$ increases, resulting in a decreased depth of the \bst{} (Section \ref{sec:summary}), and consequently lower creation time. For $M = 10^7$ and desired accuracy $0.9$, creation takes $< 2.5s$. In the application scenarios the \bst{} works with, the number of sets that must be constructed in Bloom filters is assumed to be massive, and changing Bloom filter parameters requires creating each of the sets in the database again which is prohibitively expensive, overshadowing the time taken to create the \bst{}.
\clearpage
\begin{table}[]
\centering

\begin{tabular}{c|cccc|cccc|cccc}
\toprule
a   & \multicolumn{4}{c}{M=10\textasciicircum 5} & \multicolumn{4}{c}{M=10\textasciicircum 6} & \multicolumn{4}{c}{M=10\textasciicircum 7} \\
\midrule
    & m        & \#Levels    & STime   & UTime   & m        & \#Levels    & STime   & Utime   & m        & \#Levels   & Stime    & Utime   \\
\midrule
0.5 & 12317    & 4           & 0.12    & 18.4    & 28464    & 6           & 0.96    & 156.5   & 63119    & 8          & 78.72    & 1603.8  \\
0.6 & 14334    & 3           & 0.04    & 14.4    & 32807    & 6           & 0.68    & 161.7   & 72474    & 8          & 223.52   & 1625    \\
0.7 & 16863    & 3           & 0.28    & 14.8    & 38258    & 6           & 0.64    & 165.2   & 84214    & 8          & 311      & 1616.3  \\
0.8 & 20494    & 3           & 0.12    & 15.6    & 46090    & 5           & 0.72    & 158.9   & 101088   & 8          & 426.52   & 1646.3  \\
0.9 & 27340    & 3           & 0.12    & 15.9    & 60869    & 5           & 0.56    & 163.4   & 132932   & 8          & 590.64   & 1720.2 \\
\bottomrule
\end{tabular}
\caption{System and User Time taken (in mS) to create \bst{} for different values of $M$ and desired accuracy}
\label{tab:bstCreation}
\end{table}

\subsection{Reconstruction Experiments}

\begin{figure}[htb]
	\centering
	\subfloat[][Uniformly Random Query Set] {
		\includegraphics[width=0.6\columnwidth]{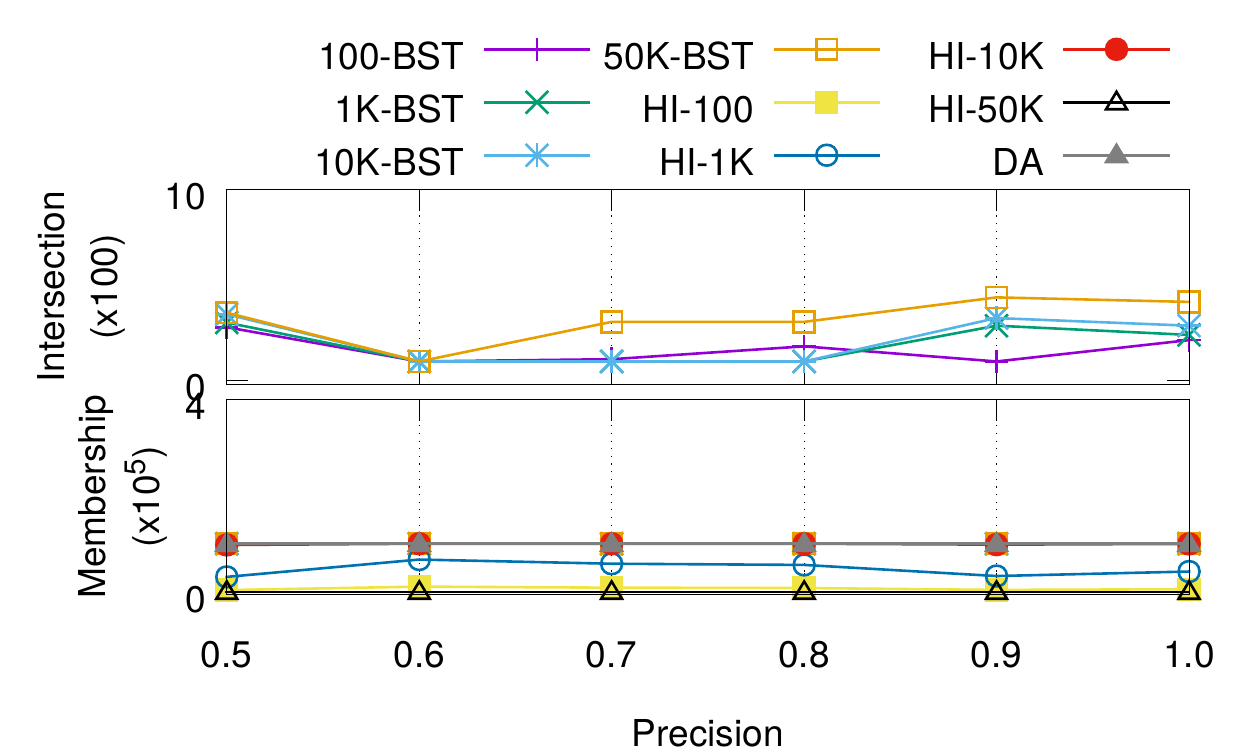}
		\label{fig:R_u_5}
	}\\
	\subfloat[][Clustered Query Set] {
		\includegraphics[width=0.6\columnwidth]{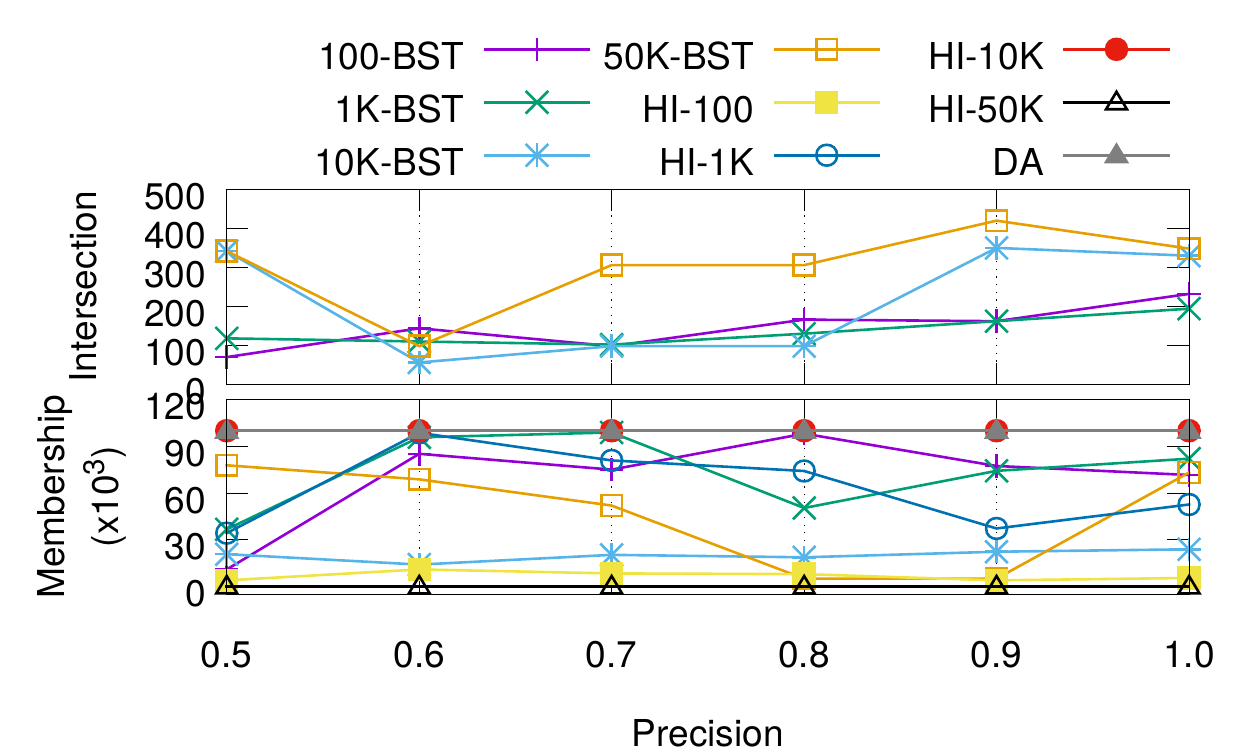}
		\label{fig:R_s_5}
	}
	\caption{Avg. No. of operations in reconstructing for $M=10^5$}
	\label{fig:R_us_5}
\end{figure}

\begin{figure}[htb]
	\centering
	\subfloat[][Uniformly Random Query Set] {
		\includegraphics[width=0.6\columnwidth]{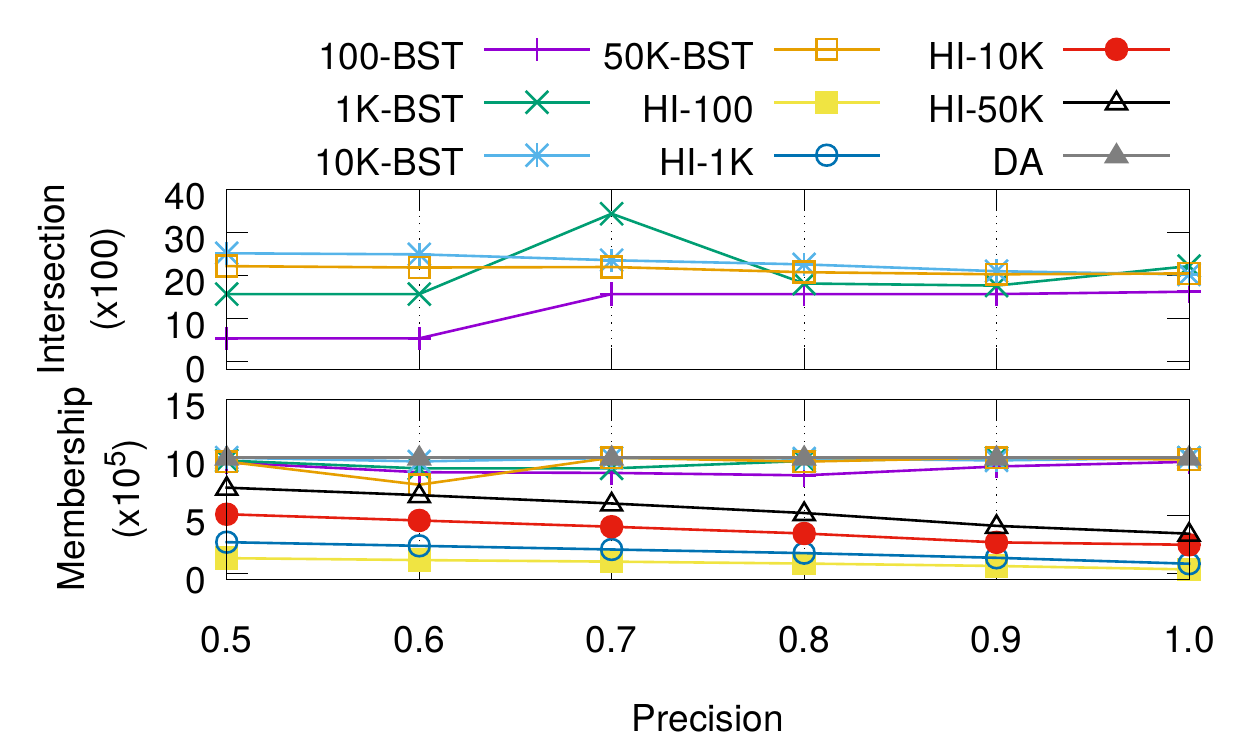}
		\label{fig:R_u_6}
	}\\
	\subfloat[][Clustered Query Set] {
		\includegraphics[width=0.6\columnwidth]{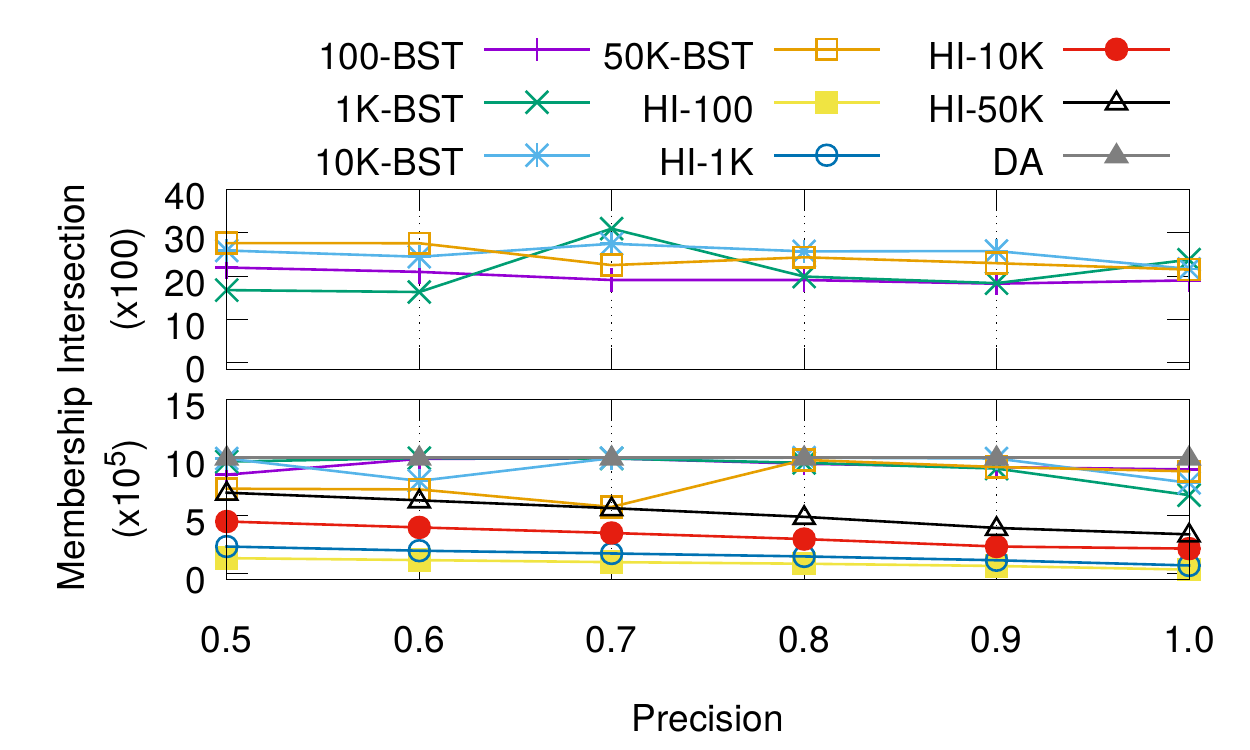}
		\label{fig:R_s_6}
	}
	\caption{Avg. No. of operations in reconstructing for $M=10^6$}
	\label{fig:R_us_6}
\end{figure}

\begin{figure}[htb]
	\centering
	\subfloat[][Uniformly Random Query Set] {
		\includegraphics[width=0.6\columnwidth]{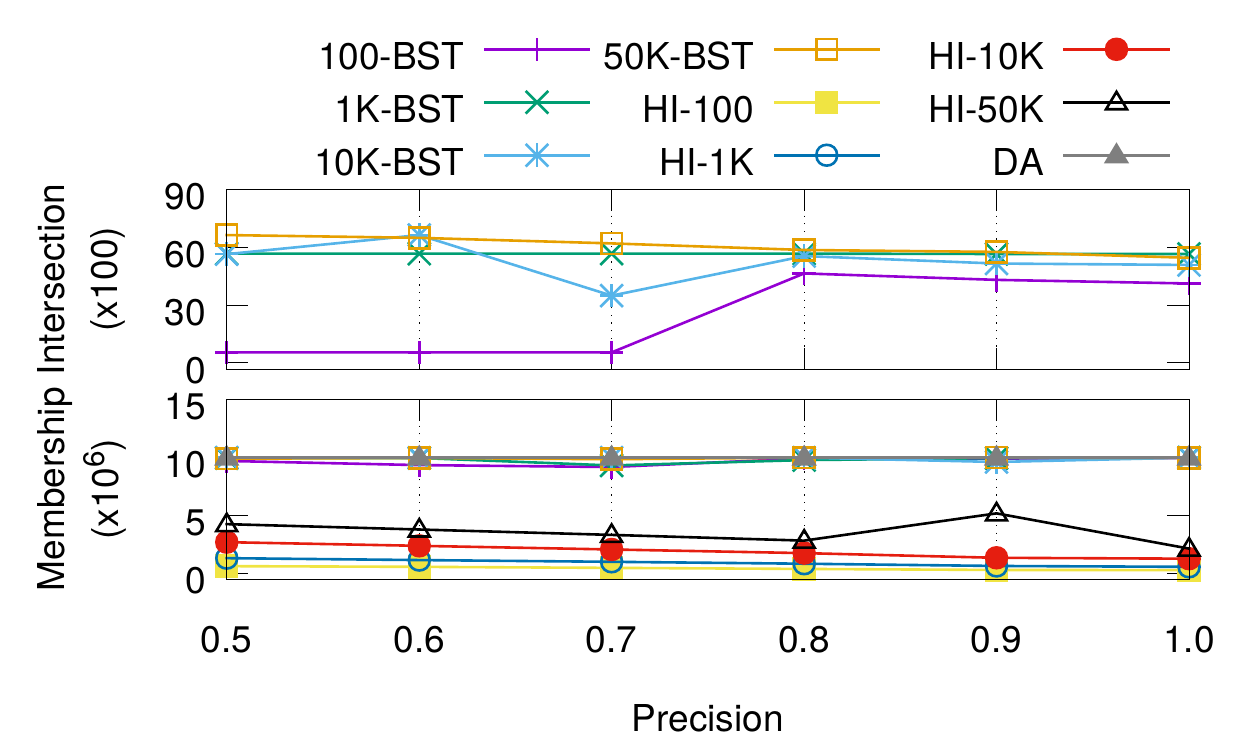}
		\label{fig:R_u_7}
	}\\
	\subfloat[][Clustered Query Set] {
		\includegraphics[width=0.6\columnwidth]{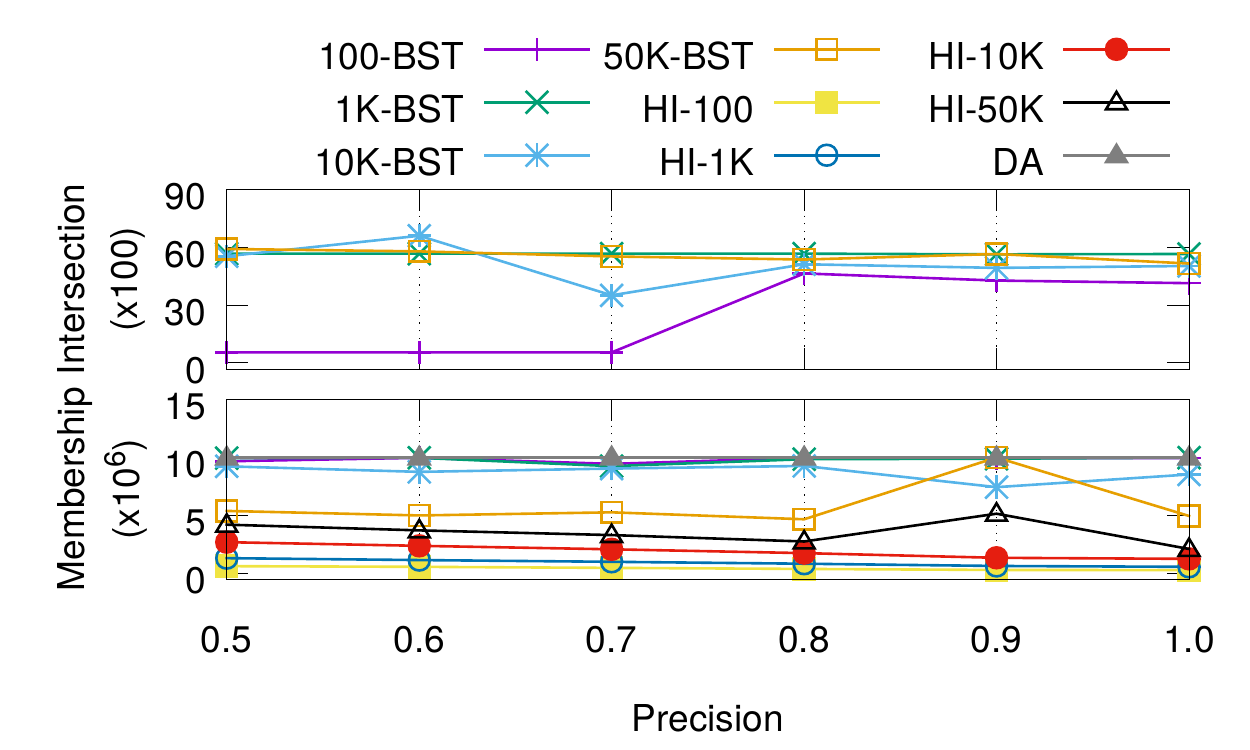}
		\label{fig:R_s_7}
	}
	\caption{Avg. No. of operations in reconstructing for $M=10^7$}
	\label{fig:R_us_7}
\end{figure}

The setup of the reconstruction experiments follow that of the sampling experiments only adding \emph{HashInvert} as a baseline.
Figures \ref{fig:R_us_6} and \ref{fig:R_us_7} show the number of intersections and set membership queries to reconstruct sets which are uniformly random and clustered, drawn from namespaces of size $M = 10^6$ and $M = 10^7$ respectively.

For the number of intersections with sampling accuracy, we see a trend that is similar to the ones in the sampling experiments, and for the same reasons. One may note that the HashInvert procedure performs more membership queries than the \bst{}, but fewer than the Dictionary Attack. Despite this, the overall cost for HashInvert is the most as can be seen in Figures \ref{fig:R_tus_6} and \ref{fig:R_tus_7}, which show the time taken for reconstruction. The overall cost for HashInvert essentially depends on the number of set or reset bits in the Bloom filter. If the Bloom filter is extremely dense, then reconstructing with the help of only reset bits efficiently reconstructs the set, whereas if it is very sparse, then one can reconstruct using the set bits. However, HashInvert is inefficient if neither of these cases apply, as is evident from the line for 'HI-10K', which sets about $50 \%$ of the bits in the Bloom Filter. The cause for this is the fact that HashInvert iterates through an inverted set for each set or reset bit in the Bloom filter. Since some of these values may already have been checked, it does save some membership queries. However, given that the membership query is very fast for simple hash functions, this does not directly translate into smaller running times.

\begin{figure}[htb]
\centering
		\subfloat[][Uniform Random Query Set]{
			\includegraphics [width=0.6\columnwidth]{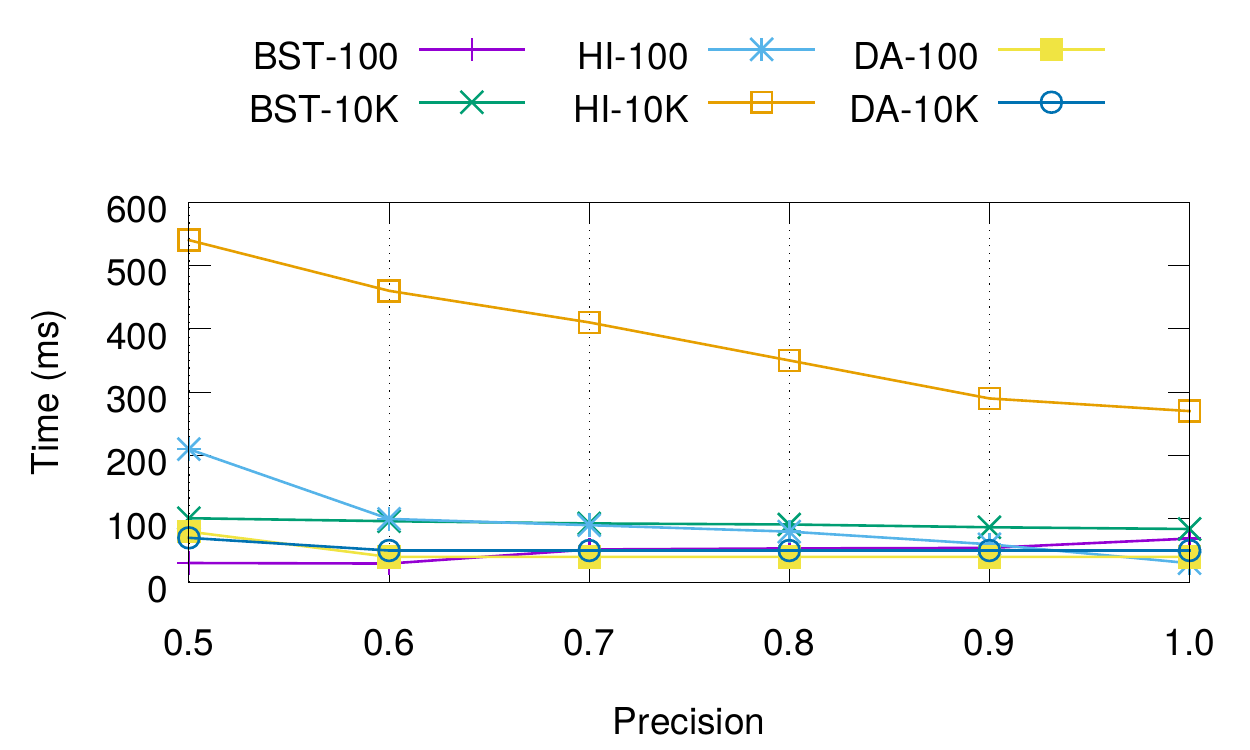}
			\label{fig:R_tu_6}
		}\\
	\subfloat[][Clustered Query Set]{
		\includegraphics[width=0.6\columnwidth]{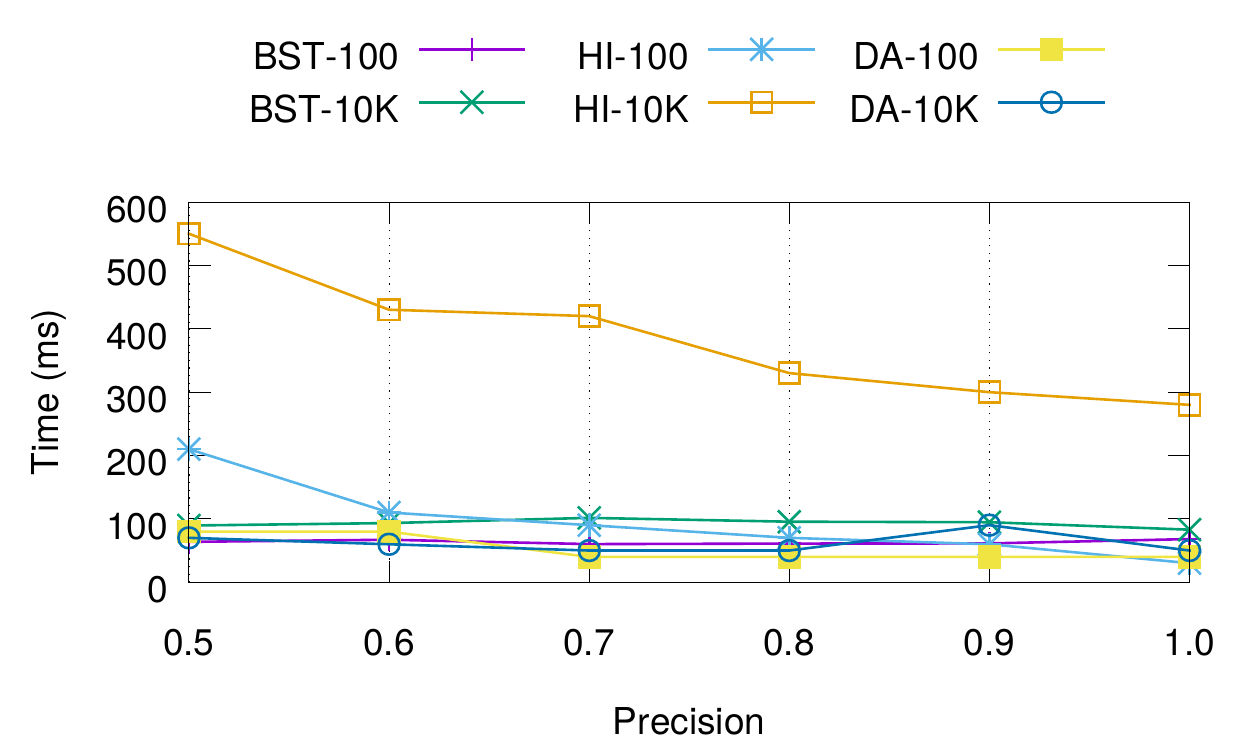}
		\label{fig:R_ts_6}
	}
	\caption{Avg. time taken for reconstruction with $M = 10^6$ for uniformly random and clustered query sets.}
	\label{fig:R_tus_6}
\end{figure}
\begin{figure}[htb]
\centering
		\subfloat[][Uniform Random Query Set]{
			\includegraphics [width=0.6\columnwidth]{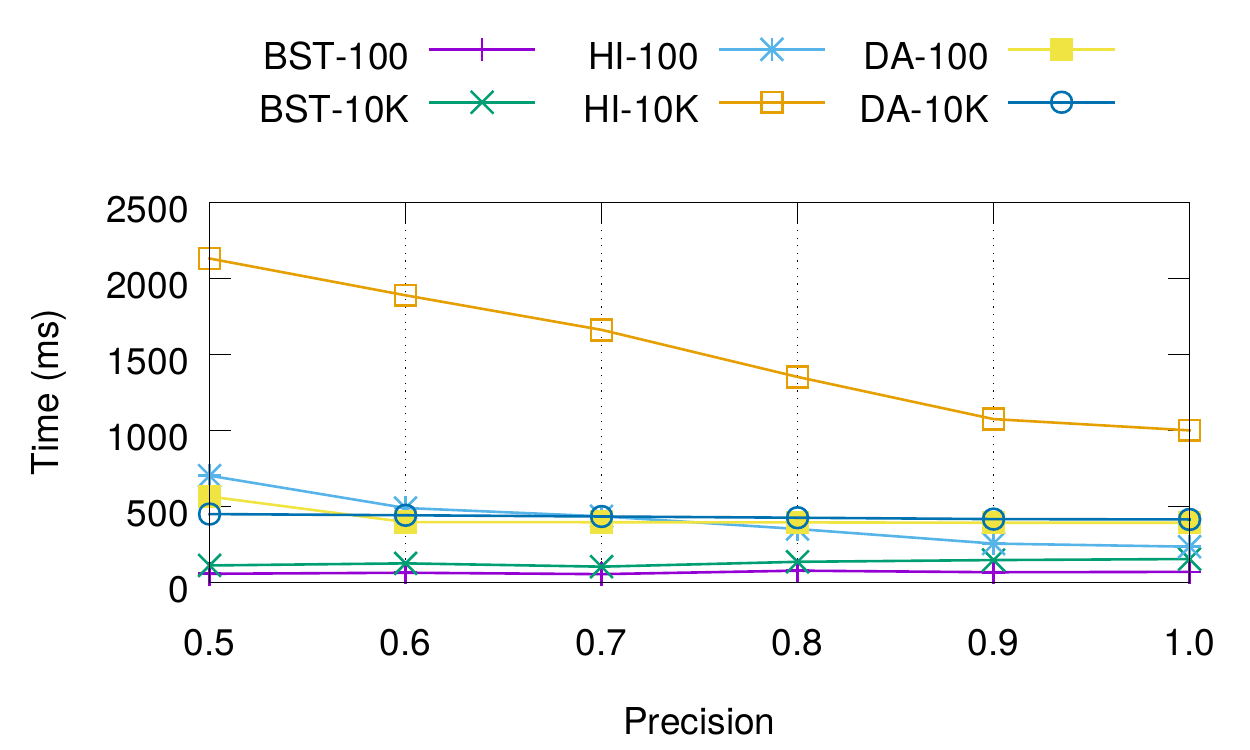}
			\label{fig:R_tu_7}
		}\\
	\subfloat[][Clustered Query Set]{
		\includegraphics[width=0.6\columnwidth]{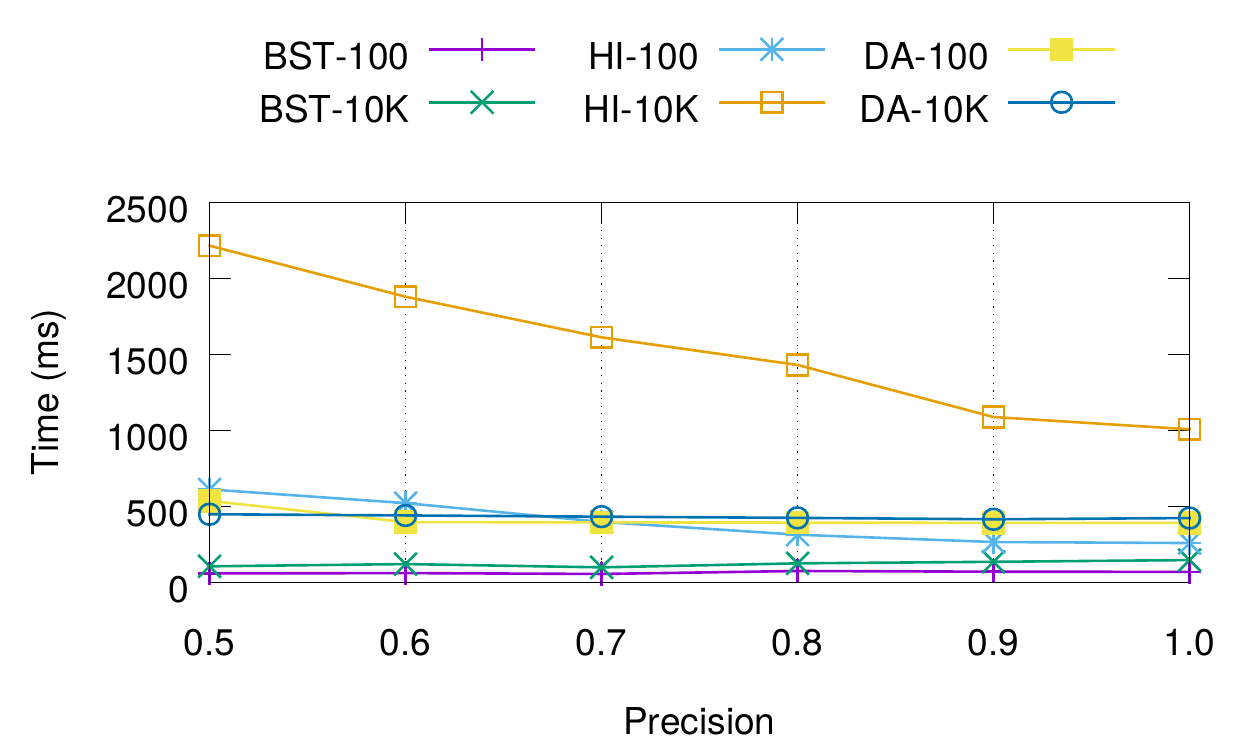}
		\label{fig:R_ts_7}
	}
	\caption{Avg. time taken for reconstruction with $M = 10^7$ for uniformly random and clustered query sets.}
	\label{fig:R_tus_7}
\end{figure}

\begin{table}[]
\centering
\begin{tabular}{|c|c|c|c|c|}
\hline
Accuracy / $n$ & 100  & 1K   & 10K  & 50K  \\ \hline
0.5            & 1    & 0.99 & 0.52 & 0.78 \\ \hline
0.6            & 1    & 0.92 & 0.75 & 0.88 \\ \hline
0.7            & 0.99 & 0.15 & 0.87 & 0.63 \\ \hline
0.8            & 0.93 & 0.49 & 0.51 & 0.12 \\ \hline
0.9            & 0.93 & 0.75 & 0.28 & 0.47 \\ \hline
1.0            & 0.84 & 0.48 & 0.43 & 0.64 \\ \hline
\end{tabular}
\caption{p-values for $M=10^6$}
\label{tab:p-values}
\end{table}

\begin{table}[]
\centering
\begin{tabular}{|c|c|c|c|}
\hline
Accuracy / $M$ & $10^5$ & $10^6$ & $10^7$ \\ \hline
0.5            & 0.522  & 0.497  & 0.535  \\ \hline
0.6            & 0.692  & 0.621  & 0.591  \\ \hline
0.7            & 0.710  & 0.691  & 0.696  \\ \hline
0.8            & 0.823  & 0.793  & 0.810  \\ \hline
0.9            & 0.921  & 0.907  & 0.906  \\ \hline
1.0            & 0.970  & 0.997  & 0.948  \\ \hline
\end{tabular}
\caption{Measured Accuracies for Uniform query sets of size $n=10^3$}
\label{tab:MeasuredAccuracies}
\end{table}

\section{Experiments with Real-world Data with Low-Occupancy Namespace}
\label{sec:expt-dynamic}
So far we presented results for the settings when the namespace is a contiguous and fixed. Now we turn our attention to more practical settings where the size of the namespace we need to handle is only a small fraction of a much larger domain and potentially spread throughout it.

\subsection{Setup}

\paragraph*{Dataset} We made use of a 34-day Twitter crawl consisting of 144 million tweets. There are a total of 7.2 million user ids in this tweet set, but they are distributed in a namespace of $\left[0, 2\times10^9\right]$ (a little over $2.2$ billion). 

\paragraph*{Varying the namespace fractions} Note that even though there are only 7.2 million unique ids in our dataset, they could be distributed across the entire namespace of 2.2 billion. Suppose, for example, we built a \bst with 256 leaves -- that is, the range of 2.2 billion is effectively divided into 256 equal-sized ranges (of which some could be empty depending on the distribution of the unique ids). From this hypothetical \bst, we construct namespaces of different namespace fractions as follows:
	\begin{itemize}
		\item {\bf Uniform Namespace:} Following our example above, suppose we want to construct a namespace of namespace fraction 0.2, we \emph{uniformly} sample 52 of 256 leaves. This gives us a set of ranges, the union of which only occupy 0.2 fraction of the total namespace.
		\item {\bf Clustered Namespace:} Again, for a namespace fraction of 0.2, we need to sample 52 of 256 leaves, but in a \emph{clustered} way. We use the same technique as explained in Section \ref{sec:expt} (in that case, we were generating clustered query sets).
	\end{itemize}

We fixed the desired accuracy, as discussed in section \ref{sec:summary}, at 0.8. Therefore, our hypothetical \bst{} has a depth of 7, with a Bloom filter size $m = 1.2 \times 10^6$. Correspondingly, the pruned-\bst{} has the same depth and Bloom filter size, but the number of nodes (and therefore the space occupancy) is much smaller.

\paragraph*{Query Bloom filters} We identified $24,000$ unique hash tags that occurred at least $1,000$ times in our dataset. The sets of users tweeting a particular hashtag is used to construct a query Bloom filter. We therefore constructed 24,000 query Bloom filters. However, when experimenting with varying namespace fractions, we simply ignore ids which do not belong in the namespace currently under consideration and construct query Bloom filters without them.

\paragraph*{Metrics} We report the following metrics.
\begin{itemize}
	\item {\bf Average Time taken.} At each namespace fraction, we run 1,000 sampling rounds on randomly chosen query Bloom filters and report the average time taken to generate a sample.
	\item {\bf Memory.} The Pruned-\bst{} occupies much less space than the full \bst{}. We report on space usage at each namespace fraction.
	\item {\bf Accuracy.} While the value of $m$, the Bloom filter size, was based on a desired accuracy for the \bst{}, the actual accuracy in a Pruned-\bst{} is expected to be better, since only those elements which occupy the namespace are stored. We report this accuracy for various namespace fractions.
\end{itemize}

\subsection{Sampling Experiments}

\paragraph*{Average time taken}
Figure \ref{fig:time} shows the average time taken to generate samples from our query Bloom filters. At namespace fractions less than $0.1$, the time taken is an order of magnitude smaller than at full namespace occupancy. It is also expected that the sampling time in case of the clustered namespace is smaller, since more leaves share common ancestors and there are far less paths in the \bst{} for the sampling algorithm to follow.
The Dictionary Attack requires $100$ seconds on average for one sample to be drawn. This is natural since the size of the namespace is extremely large in this case. As a result, we have not included the result of DA in Figure \ref{fig:time} to ensure that the finer variations in the sampling time taken for random and clustered namespaces are clearly visible.

\begin{figure}
	\centering
	\includegraphics[width=0.6\columnwidth]{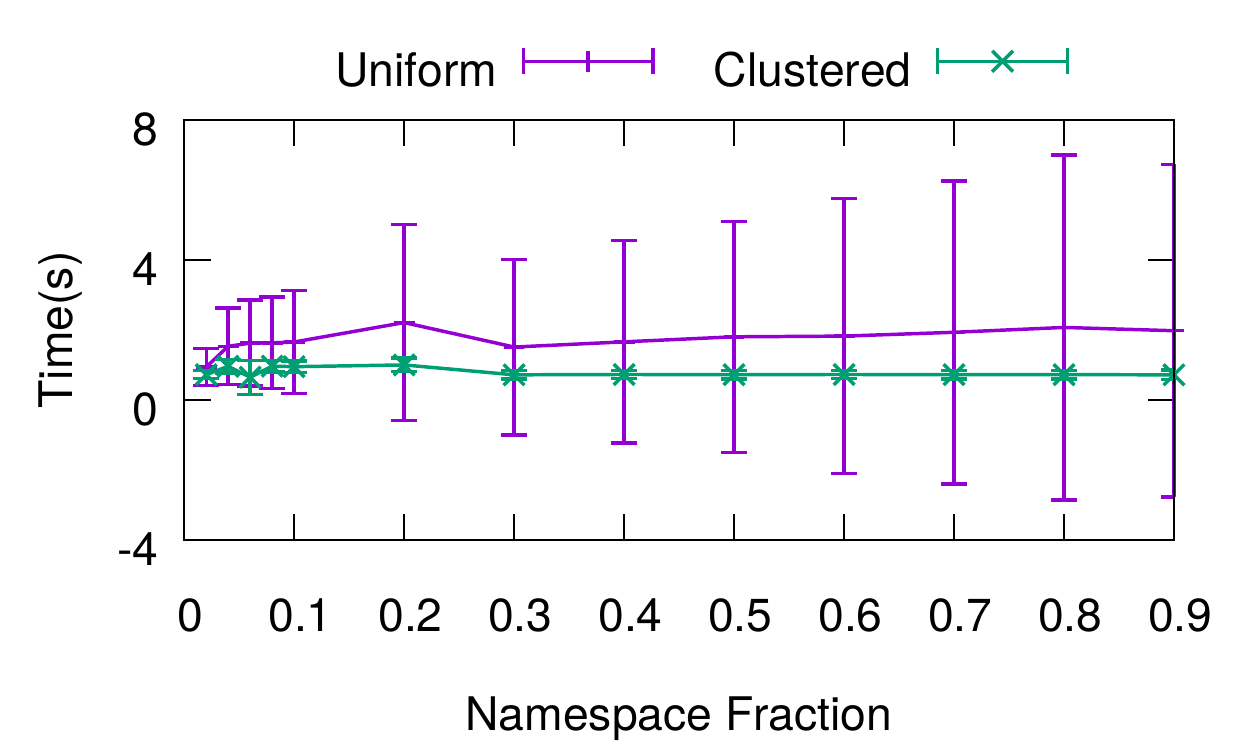}
	\caption{Time taken to generate a uniform sample for varying namespace fractions}
	\label{fig:time}
\end{figure}

\paragraph*{Memory}
Figure \ref{fig:memory} shows the memory usage at varying namespace fractions. Note that, if we built the full \bst{} for a namespace of 2.2 billion, the memory required would be approximately $36 MB$. In contrast, at a lower namespace fraction of $0.5$, the memory usage of the \bst{} is about $ 71 \%$ for the uniform case, and much lower at $21.7 \%$ for the clustered case. For the same reason as for sampling time, we expect the memory requirement of the \bst{} to be smaller for a clustered namespace.

\begin{figure}
	\centering
	\includegraphics[width=0.6\columnwidth]{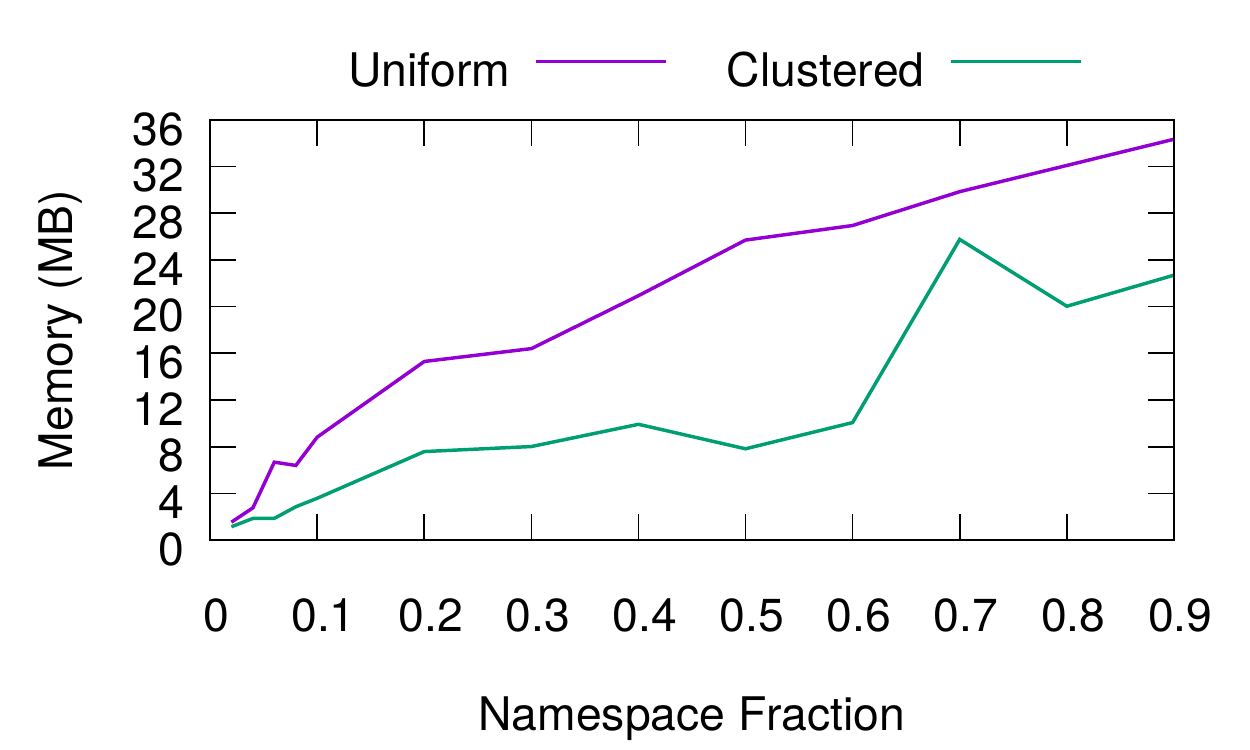}
	\caption{Memory usage at varying namespace fractions}
	\label{fig:memory}
\end{figure}

\paragraph*{Accuracy}
Figure \ref{fig:accuracy} shows the sampling accuracy at various namespace fractions. Recall that we had optimized the \bst{} for an accuracy of 0.8. But, with our Pruned-\bst, we uniformly see a higher accuracy. Accuracy depends on the size of the namespace, as mentioned in section \ref{sec:summary}, and the size of the effective namespace at a lower namespace fraction is smaller. This shows that the \bst{} is capable of producing higher accuracy results when the overall namespace is large but the actually occupied effective namespace is small.

\begin{figure}
	\centering
	\includegraphics[width=0.6\columnwidth]{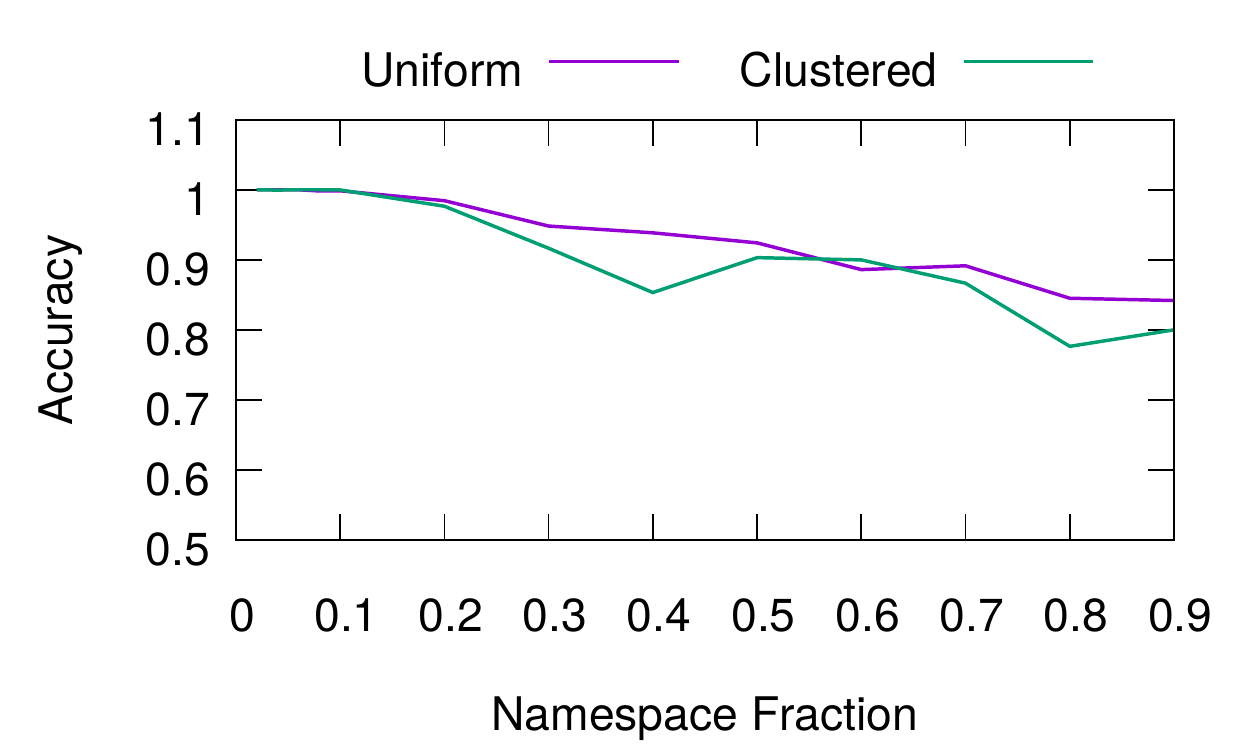}
	\caption{Sampling accuracy at varying namespace fractions}
	\label{fig:accuracy}
\end{figure}

\section{Conclusions}

In this paper we described an efficient method to do sampling and reconstruction of sets stored in Bloom filters. In particular, we described the \bst{} data structure and analyzed its properties both theoretically and experimentally. We compared our technique to the brute force approach (Dictionary Attack) as well as HashInvert (useful when using invertible hash functions to reconstruct sets). An extensive evaluation of our algorithm in various settings demonstrated  its wide applicability and significant advantages. 


\begin{thebibliography}{10}
\providecommand{\url}[1]{#1}

\bibitem{bloom70}
B.~H. Bloom, ``Space/time trade-offs in hash coding with allowable errors,''
  \emph{Commun. ACM}, vol.~13, no.~7, pp. 422--426, 1970.

\bibitem{broder03survey}
A.~Z. Broder and M.~Mitzenmacher, ``Network applications of bloom filters: A
  survey,'' \emph{Internet Math.}, vol.~1, no.~4, pp. 485--509, 2003.

\bibitem{Romero_2011}
D.~M. Romero, B.~Meeder, and J.~Kleinberg, ``Differences in the mechanics of
  information diffusion across topics: idioms, political hashtags, and complex
  contagion on twitter,'' in \emph{WWW}, 2011.

\bibitem{ghosh-wsdm:2011}
R.~Ghosh and K.~Lerman, ``A framework for quantitative analysis of cascades on
  networks,'' in \emph{WSDM}, 2011.

\bibitem{cheng-www:2014}
J.~Cheng, L.~A. Adamic, P.~A. Dow, J.~M. Kleinberg, and J.~Leskovec, ``Can
  cascades be predicted?'' in \emph{WWW}, 2014.

\bibitem{macmillan-hicss:2013}
J.~E.~R. MacMillan, W.~B. Glisson, and M.~Bromby, ``Investigating the increase
  in mobile phone evidence in criminal activities,'' in \emph{HICSS}, 2013, pp.
  4900--4909.

\bibitem{cormode05}
G.~Cormode, S.~Muthukrishnan, and I.~Rozenbaum, ``Summarizing and mining
  inverse distributions on data streams via dynamic inverse sampling,'' in
  \emph{VLDB}, 2005.

\bibitem{monemizadeh10}
M.~Monemizadeh and D.~P. Woodruff, ``1-pass relative-error
  l\({}_{\mbox{p}}\)-sampling with applications,'' in \emph{SODA}, 2010.

\bibitem{jowhari11}
H.~Jowhari, M.~Saglam, and G.~Tardos, ``Tight bounds for lp samplers, finding
  duplicates in streams, and related problems,'' in \emph{PODS}, 2011.

\bibitem{tarkoma12survey}
{S. Tarkoma and C. E. Rothenberg and E. Lagerspetz}, ``{Theory and Practice of
  Bloom Filters for Distributed Systems},'' \emph{{IEEE Comm. Surveys and
  Tutorials}}, vol.~14, no.~1, 2012.

\bibitem{bellovin07}
S.~Bellovin and W.~Cheswick, ``{Privacy-Enhanced Searches using Encrypted Bloom
  Filters},'' Columbia University, Tech. Rep. CUCS-034-07, 2007.

\bibitem{naor14}
M.~Naor and E.~Yogev, ``Bloom filters in adversarial environments,''
  \emph{CoRR}, vol. abs/1412.8356, 2014. [Online]. Available:
  \url{http://arxiv.org/abs/1412.8356}

\bibitem{cormodeFnt}
G.~Cormode, M.~N. Garofalakis, P.~J. Haas, and C.~Jermaine, ``Synopses for
  massive data: Samples, histograms, wavelets, sketches,'' \emph{Found. Trends.
  Databases}, vol.~4, no. 1-3, 2012.

\bibitem{yoon-infocom:2014}
M.~Yoon, J.~Son, and S.-H. Shin, ``Bloom tree: A search tree based on {Bloom}
  filters for multiple-set membership testing,'' in \emph{Proc.~2014 IEEE
  Conference on Computer Communications (INFOCOM '14)}, 2014, pp. 1429--1437.

\bibitem{crainiceanu-infsys:2015}
A.~Crainiceanu and D.~Lemire, ``Bloofi: Multidimensional bloom filters,''
  \emph{Inf. Syst.}, vol.~54, pp. 311--324, 2015.

\bibitem{athanassoulis-vldb:2014}
M.~Athanassoulis and A.~Ailamaki, ``Bf-tree: approximate tree indexing,''
  \emph{Proc. VLDB}, vol.~7, no.~14, pp. 1881--1892, October 2014.

\bibitem{Guo:2010}
D.~Guo, J.~Wu, H.~Chen, Y.~Yuan, and X.~Luo, ``The dynamic bloom filters,''
  \emph{{IEEE} Trans. Knowl. Data Eng.}, vol.~22, no.~1, pp. 120--133, 2010.

\bibitem{Jeffrey_understandingand}
M.~C. Jeffrey and J.~G. Steffan, ``Understanding bloom filter intersection for
  lazy address-set disambiguation,'' in \emph{SPAA}.\hskip 1em plus 0.5em minus
  0.4em\relax New York, NY, USA: ACM, 2011, pp. 345--354.

\bibitem{vitter}
J.~Vitter, ``Random sampling with a reservoir,'' \emph{{ACM} Trans. Math.
  Softw.}, vol.~11, no.~1, pp. 37--57, 1985. [Online]. Available:
  \url{http://doi.acm.org/10.1145/3147.3165}

\bibitem{papapetrou10}
O.~Papapetrou, W.~Siberski, and W.~Nejdl,
  ``Cardinality estimation and dynamic length adaptation for bloom filters,'' \emph{Dist. and Parallel Databases}, vol.~28, no. 2-3, pp. 119--156, 2010. [Online].
  Available: \url{http://dx.doi.org/10.1007/s10619-010-7067-2}

\bibitem{Mitzenmacher}
M.~Mitzenmacher, ``Compressed bloom filters,'' in \emph{PODC}.\hskip 1em plus
  0.5em minus 0.4em\relax New York, NY, USA: ACM, 2001.

\bibitem{athreya:1972}
K.~B. Athreya and P.~E. Ney, \emph{Branching Processes}.\hskip 1em plus 0.5em
  minus 0.4em\relax Springer, 1972.

\bibitem{boldi14}
P.~Boldi, ``Algorithmic gems in the data miner's cave,'' in \emph{Proc.~Fun with
  Algorithms}.\hskip 1em plus 0.5em minus 0.4em\relax Springer, 2014, pp.
  1--15. [Online]. Available:
  \url{http://dx.doi.org/10.1007/978-3-319-07890-8_1}

\bibitem{stamatis2002six}
D.~H. Stamatis, \emph{Six sigma and beyond: design for six sigma}.\hskip 1em
  plus 0.5em minus 0.4em\relax CRC Press, 2002, vol.~6.

\end{thebibliography}

\end{document}